\def\ZZ{{\mathbb Z}}
\def\cC{{\mathcal C}}
\def\cT{{\mathcal T}}
\def\fc{{\mathfrak c}}
\def\fr{{\mathfrak r}}
\def\tE{{\widetilde E}}
\def\tG{{\widetilde G}}
\def\tV{{\widetilde V}}
\def\te{{\widetilde e}}
\def\tv{{\widetilde v}}
\def\BG{{\overline G}}
\def\hf{{\widehat f}}
\def\hati{{\widehat i}}
\def\hp{{\widehat p}}
\def\hq{{\widehat q}}
\def\book#1{\rm{#1}, }
\def\paper#1{\textit{#1}, }
\def\jour#1{\rm{#1}, }
\def\yr#1{({\rm{#1}) }}
\def\vol#1{\textbf{#1}}
\def\pages#1{\rm{#1}}
\def\publ#1{\rm{#1}, }
\def\by#1{{\rm{#1}, }}
\newtheorem{theorem}{Theorem}[section]
\newtheorem{definition}[theorem]{Definition}
\newtheorem{proposition}[theorem]{Proposition}
\newtheorem{corollary}[theorem]{Corollary}
\newtheorem{remark}[theorem]{Remark}
\newtheorem{lemma}[theorem]{Lemma}
\newtheorem{law}[theorem]{Law}
\def\book#1{\rm{#1}, }
\def\paper#1{\textit{#1}, }
\def\jour#1{\rm{#1}, }
\def\yr#1{({\rm{#1}) }}
\def\vol#1{\textbf{#1}}
\def\pages#1{\rm{#1}}
\def\publ#1{\rm{#1}, }
\def\by#1{{\rm{#1}, }}
\begin{document}

\title{Conway's law, revised from a mathematical viewpoint}

\author{Shigeki Matsutani, Shousuke Ohmori, Kenji Hiranabe, and Eiichi Hanyuda}

 \date{\today} 

\begin{abstract}
In this article, we revise Conway's Law from a mathematical point of view.
By introducing a task graph, we first rigorously state Conway's Law based on the homomorphisms in graph theory for the software system and the organizations that created it.
Though Conway did not mention it, the task graph shows the geometric structure of tasks, which plays a crucial role.
Furthermore, due to recent requirements for high-level treatment of communication (due to security, knowledge hiding, etc.) in organizations and hierarchical treatment of organizations, we have reformulated these statements in terms of weakened homomorphisms, and the continuous maps in graph topology.
In order to use graph topology and the continuous map in Conway's law, we have prepared them as mathematical tools, and then we show the natural expression of Conway's correspondences with hierarchical structures.
\end{abstract}

\maketitle

\section{Introduction}

Conway's law was given by a computer programmer Melvin E. Conway.
He stated, in 1967 \cite{Conway},
{\lq\lq}Organizations which design systems (in the broad sense used here) are constrained to produce designs which are copies of the communication structures of these organizations.{\rq\rq}

Conway represented the structure of the software system and the structure of the organization by geometric networks (graphs) and found the above law by the characteristics of the geometric structure of the relationship between the two graphs, though he called them {\lq}linear graph{\rq}.
It is that a complicated product would look like the organizational structure for which it was designed or engineered.
This law indicates that in order for a product to perform its desired function, it is necessary to ensure compatibility among its component parts, i.e., the relations among the software modules and the communication among the designers.

Based on these observations and laws, recent years have been spent studying how to structure an organization to produce the desired product.
The question has shifted to "What is the best way to produce the desired software systems?"
This is sometimes referred to as Conway's inverse problem.
Thus this problem has been studied in various fields \cite{HG1999, KCD2012, St2023}.
Skelton and Pais proposed a concept of team topologies \cite{SP2019}, which is of great interest.
They provided the many concrete strategies for the Conway's inverse problem, with respect to the size of teams, the individual cognitive load and so on.

Two of the authors of this article, mathematicians, heard about Conway's Law from, the other authors, agile researchers along with the book \cite{SP2019}.
As Conway used graphs in graph theory to represent the structures of the software systems and the organizations, it is natural to call such geometrical structures based on graph theory "topology."
However, topology, in general topology, is originally defined more rigorously, and from a rigorous mathematical point of view, some misuses of the concept seem inappropriate.

Further, Conway himself showed the relation between structures of the software systems and the organizations by a homomorphism in graph theory, which preserves a kind of graph (geometrical) structure, (see Definition \ref{def:homomorphism}.)
As the map in mathematics has the arrow (in general an inverse arrow is not a map), he considered a homomorphism from the graph of the system to the graph of organization. 
Prefacing with {\lq\lq}In the not unusual case where each subsystem had its own separate design group{\rq\rq}, he continued, {\lq\lq}we find that the structures (i.e., the linear graphs) of the design group and the system are identical{\rq\rq} \cite{Conway}.
Thus he did not consider at all the isomorphic graphs corresponding to the copies of the graph.
Indeed, he wrote
{\lq\lq}Notice that this rule is not necessarily one-to-one; that is, the two subsystems might have been designed by a single design group{\rq\rq} \cite{Conway}.
He introduced homomorphism a structure-preserving relationship after he wrote
{\lq\lq}In the case where some group designed more than one subsystem we find that the structure of the design organization is a collapsed version of the structure of the system, with the subsystems having the same design group collapsing into one node representing that group{\rq\rq} \cite{Conway}.

Many people who write softwares have experienced that it is possible to write software, no matter how complicated the structure, when developing a small software with a small number of people (or in the extreme, one person).
It is a natural question, whether it contradicts Conway's Law or not because the fact that if we view the graph structure described above, a complicated structure arises from a single node seems to contradict the first statement by Conway's, which is not a copy.
Conway's law is not applicable to such a case if we consider it strictly in the framework of homomorphism, because such an arrow is forbidden in homomorphism.
However, it can be expressed by a kind of {\lq}homomorphism{\rq} from the graph of the software structure to the edgeless graph, after slightly weakening the homomorphism, w-homomorphism (Definition \ref{df:w-Hom}), though it is not a strict homomorphism; w-homomorphism is related to the contraction of graphs \cite{Diestel}.
It depends on whether we allow internal communication in a team or a person; it corresponds to whether we consider a loop in the graph or the w-homomorphism.
It is unclear what Conway thought about these things, but they are certainly necessary concepts in discussing these software productions.

As Conway described the importance of hierarchical treatment in the section {\lq}the designed system{\rq}, homomorphism does not seem to be a good tool for expressing such states.
However the w-homomorphism works well as mentioned in (\ref{eq:Seq_GS_GO}) and Figure \ref{fg:HSseq_HOseq}.
By using the w-homomorphism, we can mathematically handle the hierarchical structure that Conway argued in the section.

However,  as Conway also noted, {\lq\lq}Even in a moderately small organization, it becomes necessary to restrict communication so that people can get some {\lq}{\lq}work{\rq\rq} done.{\rq\rq} \cite{Conway}.
His claim implies again that a structure that copies the structure of the system is important as an organizational structure.

\bigskip

There might exist a mathematical confusion.
Therefore, we consider that Conway's statement should be interpreted precisely from a mathematical viewpoint.

A more detailed analysis has been done \cite{SP2019}, but we believe that tools are needed to understand the modern requirements for sophisticated communications through security and other means, as well as more complex organizational charts and individual personnel levels.
The graph expression has hierarchical handling.
It requires a tool that is consistent with hierarchical handling.

The expressive power of topology in general topology is even more flexible, broad and applicable than the contrast between the two graphs.
Before we consider the detailed structure, we should rigorously express the geometric structures based on mathematics. After obtaining the mathematical foundation, we should proceed to discuss the detailed analysis from mathematical viewpoints.
This article is devoted to the mathematical foundation of the geometric structures associated with Conway's law.
Sometimes it is important to treat mathematics rigorously when making use of mathematics.
We attempt, in the spirit of Weil\footnote{It goes without saying that there are many examples in economy of the use of mathematics to solve problems in the social sciences \cite{Schofield}.
Besides them, one of the most famous applications of pure mathematics to the social sciences is Weil's analysis of the laws of marriage, which had a strong effect on structuralism \cite{Weil}.
This is an example where the structure of the Abelian group was clarified by strictly following pure mathematics rather than by complicated interpretations.
We believe that this successful example should be used as a model when applying mathematics to social structure and social sciences.
}, to understand this Conway's law according to strict topology rigorously following the general topology \cite{Kelley}, and give a clear mathematical interpretation, including the problems mentioned above.
We emphasize that rigorous topology in general topology has the potential to represent many mathematical structures well even in complicated materials geometry \cite{Ohmori}.

\bigskip

As Introduction, we show how to express Conway's law mathematically.
We consider it by three steps.
As a first step, we express rigorously in terms of homomorphism by introducing a task graph as intermediate state to settle the confusion that the restricting requirement to avoid unnecessary communication or cognitive load, and multiplicity of responsibility of the subsystems or modules of the software.
In the first step, we cannot treat the hierarchical structure because the homomorphism forbids it.

As a second step, we weaken the homomorphism to allow to consider the hierarchical structure by introducing w-homomorphism in Definition \ref{df:w-Hom}.
By using it, we can treat (\ref{eq:Seq_GS_GO}) and Figure \ref{fg:HSseq_HOseq}, which Conway described the importance and the meaning of by a single section.

However though it is sometimes important that we consider how the coarse structure is realized in the finer structure of the organization, the w-homomorphism is not a good tool to express it.
Further such hierarchical sequences (\ref{eq:Seq_GS_GO}) and Figure \ref{fg:HSseq_HOseq} should be described with their topology from a mathematical point of view.

Thus as a third step, we describe Conway's arguments by introducing the continuous maps and graph topology.
Since homomorphism and w-homomorphism are automatically continuous maps, we can naturally express Conway's law.

\bigskip

As Introduction, we partially show the first step.
Let the graphs of the artifact software system $S$ and the organization $O$ that produces $S$ be $G_S$ and $G_O$, respectively.

\subsection{Conway's law: mathematical revised as the first step}
We will mention it more precisely later.
\begin{enumerate}

\item Conway stated that there exists a homomorphism $f: G_S \to G_O$ by observations.

\item Suppose that for a given $G_O$ and an ideal software system $\tG_S$, which we require, we cannot find a homomorphism $f: \tG_S \to G_O$. 
Then the organization cannot produce the ideal system.

Then instead of $\tG_S$, we may find an appropriate system $G_S$ such that there exists an injective homomorphism $\iota : G_S \to \tG_S$ and a homomorphism $f : G_S \to G_O$. 
This means that the structure of the created software, $G_S$, is reflected by the structure of the organization, $G_O$, although this is not recommended.

In this article we called the subgraph $G_S$ of $\tG_S$ a truncated subgraph of $\tG_S$, which is a topological minor of $\tG_S$ \cite{Diestel}.

\item Suppose there is a homomorphism $f: G_S \to G_O$ which, for simplicity, is surjective as illustrated in Figure \ref{fg:GS_GT}.
We prepare a graph $G_T=(V_T, E_T)$ of the structure of the task of the software system such that there exists a surjective homomorphism $p: G_S \to G_T$ and injective homomorphism $i: G_T \to G_O$ satisfying $f = i \circ p$.

We call $p$ a fiber structure, which means that for a vertex $v_O$ of $G_O$, a team or a person, $v_O$, is responsible for the submodules $p^{-1}(\{i^{-1}(v_O)\})=f^{-1}(\{v_0\})$.

\end{enumerate}

Conway claimed that to avoid unnecessary communication, $i$ must be isomorphic and $G_T$ should not be complicated.
He also implicitly assumed that this $G_T$ must reflect the geometric structure of the software $S$ in order to have mathematical meaning as a modeling, which is expressed by the homomorphism.

As the second step, the above can be weakened by replacing the homomorphism with a weak one.

\bigskip

As the third step, we introduce the topology of graphs and consider the properties of the continuous map between them to express Conway's law more naturally from a mathematical point of view.
As Skelton and Pais developed it as {\lq}team topologies{\rq} \cite{SP2019}, it means that we clarify that the relation of Conway's law and topology.

Using them, we also propose that Conway's law should be described in accordance with hierarchical structures, after assigning the topology $\cT_S$ or $\cT_T$ and $\cT_O$ in general topology to the artifact software $S$ or task $T$ and the organization $O$ producing $S$.
Then there is a {\lq}morphism{\rq} from $O$ to $S$.

\begin{enumerate}

\item 
Since for a set $X$ a topology $\cT_X$ of $X$ is an assignment, we can assign a different topology $\cT_X'$ to $X$. 
Assigning a topology means choosing the {\lq}geometric structure{\rq} of $X$.
Thus we say that the structure in the studies of Conway's law should be understood as the assignment of topology by the graph structure.

\item
For a graph $G$, we show that the topology of $G$ in general topology is naturally determined by considering the collection of connected subgraphs of the graph $G$, since the collection of connected subgraphs of the graph $G$ is a partial order set, and the partial order set has the order topology, which recovers the order \cite{AN2023}.

\item 
For sets $X$ and $Y$ and a map $f : X\to Y$, whether $f$ is continuous or not is strictly defined in topology theory, even for finite sets $X$ and $Y$;
the continuous map $f$ is defined only for sets with topologies $\cT_X$ and $\cT_Y$ such that the inverse image $f^{-1}$ of any open set in $Y$ belongs to $\cT_X$.
We consider a map $f$ from the collection $\cC_{G_S}$ of connected subgraphs with the topology $\cT_{G_S}$ of the graph $G_S$ of the software to $\cC_{G_O}$ of that of the organization $O$.
If $f$ is a continuous injection, we can define the morphism $f^\#: G_O \to G_S$.
Instead of maps between graphs, we consider a map between the collection of subgraphs of graphs, we can flexibly deal with geometric structures of graphs.

\item For example, if $O$ has enough members with complete graphs and topology $\cT_O$, the map is continuous even for any $S$ with complicated topology.

\item For the topological spaces, we define the mathematical structures on them, known as sheaves in sheaf theory, which play a crucial role in modern mathematics \cite{Gogunen, Sendroiu}.

\end{enumerate}

Using these notes, we investigate the geometric structures of software and organizations based on the topology of the general topology.
Since the homomorphisms and the w-homomorphisms of the graphs are continuous maps, we can rephrase Conway's law above in terms of continuous maps.
By rephrasing it, we express the complicated structure of the organizations with hierarchical structure as the third step.

The contents of this article are as follows.
Section 2 is for mathematical preparation of the topological expression of graphs. In Section 3, we show the topological expression of graphs.
Section 4 is the mathematical investigation of the geometric structures associated with Conway's law.
We give a conclusion of this article in section 5.

\section{Mathematical preliminary}

\subsection{Review of topology}

We recall the definition of topology in the general topology.

\begin{definition}
For a set $X$ and its power set $\wp(X)$, i.e., set of any subset of $X$, a subset $\cT$ of $\wp(X)$ is topology of $X$ if the following hold.

\begin{enumerate}

\item $X$ and $\emptyset$ belong to $\cT$,

\item the union of the elements of any subcollection of $\cT$ belongs to $\cT$, and

\item the intersection of the elements of any finite subcollection of $\cT$ belongs to $\cT$.

\end{enumerate}
We refer to the element of $\cT$ as an open set.
In other words, $\cT$ is the set of open sets.
Further we call $(X, \cT)$ \emph{topological space}.
\end{definition}

For a topological space  $(X, \cT)$, we say that $U \in \wp(X)$ is a closed set if $U^\fc := X \setminus U$ belongs to $\cT$.
Let $\cT^{(c)}$ be the set of all closed sets.

\begin{remark}
{\rm{
A topological space $(X, \cT)$ of a finite set $X$ can be regarded as the set of closed sets.
}}
\end{remark}

We also recall the neighborhood of a point $q\in X$.

\begin{definition}
For a topological space $(X,\cT_X)$, take a point $q$ of $X$.

\begin{enumerate}
\item An \emph{open neighborhood} $U$ of $q$ is an open set and contains $q$,
i.e., $q\in U\in\cT_X$.

\item A subset $V \subset X$ is a \emph{neighborhood} of $q$ if $V$ contains an open neighborhood $U$ of $q$, $q \in U\subset V$ for $q\in U\in\cT_X$.

\item The \emph{closure} of $U$ denoted by $\overline{U}$ is defined by
$
\overline{U}:= \displaystyle{\bigcap_{V\in\cT_X^{(c)}: U\subset V} V}
$.

\item 
Suppose $X$ is a finite set.
Let $U \in \cT_X$ be a minimal open neighborhood of $q$, i.e., 
$\displaystyle{U:=\bigcap_{V\in\cT_X: q\in V} V}$, which is denoted by $\underline{q}$.  

\end{enumerate}

\end{definition}

\bigskip

\subsection{Review of poset}

Let us consider a partial ordered set (poset) $(P, \le)$; 
$P$ is a set satisfying the following:
there is a map $\fr_{\le}$ from $P \times P$ to $\ZZ_2:=\{$false, true$\}$ for $\le$ satisfying

\begin{enumerate}
\item for any $x\in P$, $x \le x$ (\emph{reflexive relation}),

\item  if $x$ and $y \in P$ satisfy $ x \le y$ and $y \le x$, $x = y$ 
(\emph{anti-symmetric relation}) and

\item
 if $x,y,z\in P$ satisfy $ x \le y$ and $y \le z$,  $x \le z$ (\emph{transitive relation}).
\end{enumerate}

If the cardinality of a poset $P$ is finite, we call it a \emph{finite poset}.

\begin{lemma}\label{lm:poset_top}
A finite poset $(P, \le)$ has a natural topology $\cT_P$ induced from $\leq$, i.e., the set $P$ with $\cT_P$ recovers the partial order $\leq$ again.
The topological space $(P, \cT_P)$ is referred to as \emph{the poset topology} of $(P, \le)$.
\end{lemma}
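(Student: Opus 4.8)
My plan is to realize $\cT_P$ as the Alexandrov topology of the poset and then show that the order read off from this topology is exactly $\le$, so that no information is lost. Concretely, I would declare a subset $U \subseteq P$ to be \emph{open} precisely when it is a down-set for $\le$, i.e.\ whenever $x \in U$ and $y \le x$ one has $y \in U$. Writing $\cT_P$ for the collection of all such down-sets, the two things to verify are (i) that $\cT_P$ is a topology in the sense of the three axioms recalled above, and (ii) that the partial order $\le$ can be reconstructed from the space $(P,\cT_P)$ alone.

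For (i), the verification is routine. Both $P$ and $\emptyset$ are trivially down-sets, and if $\{U_i\}$ is any family of down-sets then $\bigcup_i U_i$ and $\bigcap_i U_i$ are again down-sets, since the defining implication ``$x\in U,\ y\le x \Rightarrow y \in U$'' is preserved under unions and intersections. Note that $\cT_P$ is in fact closed under \emph{arbitrary} intersections; finiteness of $P$ is not strictly needed for this, but it guarantees that the minimal open neighborhood $\underline{q}=\bigcap_{V\in\cT_P:\,q\in V}V$ is itself open, and a direct check identifies it with the principal down-set $\underline{q}=\{x\in P : x \le q\}$.

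For (ii), I would recover the order by the rule $x \sqsubseteq y \iff x \in \underline{y}$, equivalently ``$x$ lies in every open set containing $y$,'' equivalently $\underline{x}\subseteq\underline{y}$. By the computation of $\underline{y}$ above this reads $x \sqsubseteq y \iff x \le y$, so $\sqsubseteq$ and $\le$ coincide; this is the precise sense in which $(P,\cT_P)$ ``recovers'' $\le$. Reflexivity of $\sqsubseteq$ is just $q \in \underline{q}$, transitivity follows from the chain $\underline{x}\subseteq\underline{y}\subseteq\underline{z}$, and antisymmetry is exactly the $T_0$ separation property: if $\underline{x}=\underline{y}$ then $x\le y$ and $y\le x$, whence $x=y$. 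One may equally phrase the reconstruction through the specialization preorder $x\in\overline{\{y\}}$; with the down-set convention the closed sets are the up-sets, and one obtains the same order up to the chosen direction.

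The main point to get right is not any hard estimate but the bookkeeping of conventions: one must fix once and for all whether the open sets are down-sets or up-sets, and correspondingly whether the reconstructed order is defined via minimal neighborhoods $\underline{q}$ or via closures, so that the extraction map is a genuine two-sided inverse to the assignment $\le\ \mapsto\ \cT_P$. With the down-set convention above everything matches on the nose, and the sole place finiteness enters is to ensure the minimal neighborhoods $\underline{q}$ exist. Since the recovered relation is forced to equal the given $\le$, its poset axioms require no separate checking, and the lemma follows.
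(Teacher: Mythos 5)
Your proposal is correct and follows essentially the same route as the paper: the paper takes $\cT_P$ to be the collection of arbitrary unions of the principal down-sets $U_x=\{y : y\le x\}$, which is exactly your Alexandrov topology of all down-sets, and it likewise recovers the order via inclusion of minimal open neighborhoods $\underline{p}\subseteq\underline{q}$. Your extra remarks (closure under arbitrary intersections, the $T_0$/antisymmetry observation) are consistent refinements rather than a different argument.
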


\begin{proof}
For $x \in P$, we let $U_x:=\{y \in P\ |\ y \le x\}$.
We note that for $y \in U_x$, $U_y \subset U_x$.
Now we introduce $\cT_0:=\{ U_x \ | \ x \in P\}$.
We define $\cT_P$ generated by $\cT_0$, or
$$
\cT_P:=\{\cup_{\lambda} U_\lambda \ |\ U_\lambda \in \cT_0\} \cup \emptyset.
$$
Then it is obvious that $\cT_P$ satisfies the definition of topology.

For any $q$ and $p$ in $P$, we will define the map $\fr_{\le}(p, q)$ to $\ZZ_2$ as follows.
Let their minimal open neighborhoods $\underline{p}$ and $\underline{q}$ of $p$ and $q$.
Then if $\underline{p} \subset \underline{q}$, let $p \leq q$ and if $\underline{q} \subset \underline{p}$, $q \leq p$.
If there is no such relation between $\underline{p}$ and $\underline{q}$, we don't define the relation between $p$ and $q$.
Then we recover the partial order for $P$.
From the construction, it is consistent with the original one.
\end{proof}

Note that in $(P, \cT_P)$, we have $\underline{p}=U_p$ for every $p\in P$ by using the notation $U_x$ in the above proof.

\begin{lemma}\label{lm:4cases}
We consider a topological space $(P,\cT_P)$ induced from $\leq$ in a finite poset $(P, \le)$.
For any $p$ and $q$ in $P$, there are four disjoint cases:
\begin{enumerate}
\item $p,q \in \underline{q}\cap \underline{p}$ which corresponds to $p = q$,

\item $p \in \underline{q}\cap \underline{p}$ and $q \not\in \underline{q}\cap \underline{p}$ which corresponds to $p \leq q$ and $p\neq q$

\item $q \in \underline{q}\cap \underline{p}$ and $p \not\in \underline{q}\cap \underline{p}$ which corresponds to $q \leq p$ and $p\neq q$, and 

\item $p, q \not\in \underline{q}\cap \underline{p}$ which corresponds to $p \not\leq q$, and $q \not\leq p$.

\end{enumerate}
\end{lemma}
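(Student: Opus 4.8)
The plan is to reduce the entire statement to the identification $\underline{p}=U_p=\{y\in P \mid y\le p\}$ recorded immediately after Lemma~\ref{lm:poset_top}, after which everything becomes a dictionary between set membership and the order relation. The first thing I would isolate is the description of the relevant intersection,
\[
\underline{q}\cap\underline{p}=\{y\in P \mid y\le p \text{ and } y\le q\},
\]
so that a point lies in $\underline{q}\cap\underline{p}$ exactly when it sits below both $p$ and $q$.

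Next I would test the two distinguished points $p$ and $q$ against this description; this is the heart of the argument. Using the \emph{reflexive relation} ($p\le p$ and $q\le q$), the membership of $p$ collapses to a single condition: $p\in\underline{q}\cap\underline{p}$ holds if and only if $p\le q$. Symmetrically, $q\in\underline{q}\cap\underline{p}$ holds if and only if $q\le p$. Once these two equivalences are in hand, the four listed cases are simply the four Boolean combinations of the questions ``is $p\le q$?'' and ``is $q\le p$?''.

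I would then read off each case directly. Case~(1), where $p,q\in\underline{q}\cap\underline{p}$, gives $p\le q$ and $q\le p$, whence $p=q$ by the \emph{anti-symmetric relation}; conversely $p=q$ forces both memberships via reflexivity. Cases~(2) and~(3) are mutually symmetric: $p\le q$ together with $q\not\le p$ yields $p\le q$ and $p\neq q$, and likewise with the roles exchanged for~(3). Case~(4), where neither point lies in the intersection, is exactly $p\not\le q$ and $q\not\le p$, i.e.\ incomparability.

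Finally, disjointness and exhaustiveness need no separate work: each of $p$ and $q$ either belongs to $\underline{q}\cap\underline{p}$ or does not, so the four cases are precisely the four cells of this $2\times 2$ partition and are pairwise disjoint by construction. There is no genuine obstacle in this lemma; the only two points that merit care are to invoke anti-symmetry (rather than mere reflexivity) in Case~(1) in order to upgrade the pair of comparabilities $p\le q$, $q\le p$ to an honest equality, and to make sure the equivalences of the second paragraph explicitly use reflexivity to discharge the ``$y\le$ itself'' clause.
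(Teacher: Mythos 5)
Your proof is correct and follows essentially the same route as the paper: both reduce everything to the identification $\underline{p}=U_p=\{y\mid y\le p\}$, translate the memberships $p\in\underline{q}\cap\underline{p}$ and $q\in\underline{q}\cap\underline{p}$ into the order relations $p\le q$ and $q\le p$ via reflexivity, and settle Case~(1) by anti-symmetry. Your presentation as a $2\times 2$ Boolean table is a slightly cleaner packaging of the paper's case-by-case verification, but there is no substantive difference.
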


\begin{proof}
\begin{enumerate}

\item If $p,q\in \underline{q}\cap \underline{p}$, then we have $\underline{p}=\underline{q}$, which means $p=q$. 
		For $p=q$, the relation $p,q\in \underline{q}\cap \underline{p}$ is clear.
		
\item $p\in \underline{q}\cap \underline{p}$ shows $p \leq q$. If $p=q$, $q\in \underline{q}\cap \underline{p}$.
		Conversely, if $p\leq q$ and $p\not =q$, then $p \in \underline{p}\subset \underline{q}$. Thus, $p\in \underline{q}\cap \underline{p}$.
		If $q\in \underline{q}\cap \underline{p}$, we obtain $q\leq p$ and hence, $p=q$. 

\item It is clear from the proof of (2).

\item Suppose $p \leq q$ holds. 
		Then, $p\in \underline{q}$ that contradicts to $p,q \not \in \underline{q}\cap \underline{p}$.
		Also, we leads to a contradiction if $q \leq p$ holds.
		Conversely, if $p \in \underline{q}\cap \underline{p}$, then $p\in \underline{q}$. Hence, $p\leq q$, which contradicts $p\not \leq q$ and $q\not \leq p$.
		If $q \in \underline{q}\cap \underline{p}$, we have $q\leq p$ that is a contradiction.
\end{enumerate}
\end{proof}

We have a key lemma:

\begin{lemma}\label{lm:poset_cont}
Let $(P, \le)$ and $(P', \le ')$ be posets.
We consider the poset topological spaces $(P, \cT_P)$ and $(P', \cT_{P'})$ induced from $\leq$ and a map $f: P \to P'$.
If and only if $f$ is continuous, for any $p$ and $q$ in $P$ such that $p \leq q$, $f(p) \leq f(q)$.
\end{lemma}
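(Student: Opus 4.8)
The plan is to exploit the explicit description of $\cT_P$ through the basic open sets $U_x = \underline{x} = \{y \in P \mid y \le x\}$ provided by Lemma \ref{lm:poset_top} and the remark following it. The first thing I would record is the dictionary between the topology and the order: a subset $W \subseteq P$ is open if and only if it is \emph{downward closed}, meaning that $x \in W$ and $y \le x$ force $y \in W$. Each $U_x$ is downward closed by transitivity, so every union of such sets is downward closed; conversely a downward closed $W$ equals $\bigcup_{x \in W} U_x$ and is therefore open. With this translation in hand the equivalence reduces to matching continuity against order-preservation, and I would establish the two implications in turn.

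For the forward implication I would assume $f$ continuous and fix $p \le q$. The minimal open neighborhood $\underline{f(q)} = U'_{f(q)}$ of $f(q)$ is open in $P'$, so $W := f^{-1}(\underline{f(q)})$ is open in $P$ and hence downward closed. Reflexivity gives $f(q) \le f(q)$, so $q \in W$; since $W$ is downward closed and $p \le q$, we also have $p \in W$, which says precisely $f(p) \le f(q)$.

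For the reverse implication I would assume $f$ is order-preserving and check that $f^{-1}(V)$ is open for each open $V \subseteq P'$. Because preimages commute with unions and every open set is a union of basic open sets, it suffices to treat $V = U'_{x'}$. Here $f^{-1}(U'_{x'}) = \{z \in P \mid f(z) \le x'\}$, and if $z$ belongs to this set while $w \le z$, then monotonicity yields $f(w) \le f(z) \le x'$, so $w$ belongs to it as well. Thus $f^{-1}(U'_{x'})$ is downward closed, hence open, and $f$ is continuous.

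The whole argument is short, and the only step requiring genuine care is the preparatory identification of open sets with lower sets (equivalently the fact $\underline{p} = U_p$ recorded after Lemma \ref{lm:poset_top}), together with the routine reduction of continuity to the behaviour on a basis. Once these are cleanly stated, each direction collapses to a single application of reflexivity or transitivity, so I do not expect any substantive obstacle beyond organizing these preliminaries.
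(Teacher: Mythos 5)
Your proof is correct and follows essentially the same route as the paper: the forward direction uses that $f^{-1}(\underline{f(q)})$ is an open set containing $q$ (hence containing everything below $q$), and the reverse direction checks openness of preimages on the basic open sets $\underline{q'}$. Your preliminary identification of open sets with downward-closed sets is just a clean repackaging of the paper's appeal to the minimality of $\underline{q}$ and to $\underline{p}\subset f^{-1}(\underline{q'})$, so there is no substantive difference.
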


\begin{proof}
Suppose that $f:P\to P'$ is continuous.
Fix $q\in P$.
From $f(q)\in \underline{f(q)}$, we have $q\in f^{-1}(\underline{f(q)})\in \cT_P$.
The minimality of $\underline{q}$ provides $\underline{q}\subset f^{-1}(\underline{f(q)})$, which implies $f(\underline{q}) \subset \underline{f(q)}$.
Therefore, if $p \leq q$, then $p\in \underline{q}$ and by $f(\underline{q}) \subset \underline{f(q)}$ we obtain $f(p)\leq f(q)$.

Since in general, for a map $f: X\to Y$, $f^{-1}(\cup_{\lambda \in \Lambda} A_{\lambda}) = \cup_{\lambda \in \Lambda} f^{-1}(A_{\lambda})$, we show that for every $q'\in P'$, $f^{-1}(\underline{q'})$ is an open set in $P$.
If $f^{-1}(\underline{q'})$ is empty, it is obvious.
Thus we assume that $f^{-1}(\underline{q'})\neq \emptyset$.
Take an element $p \in f^{-1}(\underline{q'})$. From the definition, $f(p)\leq q'$. For any $q\in \underline{p}$ (i.e., $q \leq p$, $f(q)\leq f(p)$, and thus 
$f(q)\leq f(p)\leq q'$. It means that $q \in f^{-1}(\underline{q'})$, and $\underline{p}\subset f^{-1}(\underline{q'})$.
Hence $f^{-1}(\underline{q'})$ is an open set in $P$.
\end{proof}

\begin{corollary}
Let $(P, \le)$ and $(P', \le ')$ be posets with the poset topologies $(P, \cT_P)$ and $(P', \cT_{P'})$ and a continuous map $f: P \to P'$.
Then the following hold.

\begin{enumerate}

\item for every $q \in P$, $f(\underline{q}) \subset \underline{f(q)}$, and

\item for every $p', q' \in f(P)$, such that $p' \not\leq' q'$, we have $p \not\leq q$ for every $p \in f^{-1}(\{p'\})$ and $q \in f^{-1}(\{q'\})$.

\end{enumerate}
\end{corollary}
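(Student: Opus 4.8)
The plan is to derive both statements directly from the characterization already established in Lemma \ref{lm:poset_cont}, namely that continuity of $f$ is equivalent to the order-preserving property that $p \leq q$ implies $f(p) \leq' f(q)$, together with the identification $\underline{q} = U_q = \{y \in P \mid y \leq q\}$ noted after Lemma \ref{lm:poset_top}. Both items are then immediate consequences, so the work is essentially bookkeeping rather than a new argument.

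For part (1), I would first observe that, since $f$ is continuous, Lemma \ref{lm:poset_cont} gives that $f$ preserves the order. An arbitrary element of $f(\underline{q})$ has the form $f(p)$ for some $p \in \underline{q}$; because $\underline{q} = U_q$, this membership means exactly $p \leq q$, whence $f(p) \leq' f(q)$ by order preservation. Rewriting $f(p) \leq' f(q)$ as $f(p) \in U_{f(q)} = \underline{f(q)}$ yields $f(\underline{q}) \subset \underline{f(q)}$. In fact this inclusion is precisely the intermediate step already carried out inside the proof of Lemma \ref{lm:poset_cont}, so part (1) can be regarded as a restatement of that conclusion.

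For part (2), I would argue by contraposition against the order-preserving property. Suppose, to the contrary, that some $p \in f^{-1}(\{p'\})$ and $q \in f^{-1}(\{q'\})$ satisfy $p \leq q$. Applying the order-preserving property from Lemma \ref{lm:poset_cont} gives $f(p) \leq' f(q)$, that is $p' \leq' q'$, contradicting the hypothesis $p' \not\leq' q'$. Hence no such pair exists, and $p \not\leq q$ holds for every $p \in f^{-1}(\{p'\})$ and every $q \in f^{-1}(\{q'\})$.

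Since both parts reduce to a single application of the equivalence proved in Lemma \ref{lm:poset_cont}, I expect no substantive obstacle. The only point requiring care is the bookkeeping: one must invoke the remark $\underline{p} = U_p$ to pass between the set membership $p \in \underline{q}$ and the order relation $p \leq q$ in both directions, and keep the primed order $\leq'$ on $P'$ notationally distinct from $\leq$ on $P$ throughout.
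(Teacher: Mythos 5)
Your proposal is correct and takes essentially the same route as the paper: part (1) is the inclusion $f(\underline{q})\subset \underline{f(q)}$ already extracted inside the proof of Lemma \ref{lm:poset_cont} (the paper simply re-runs the minimal-open-neighborhood argument rather than citing the lemma's order-preservation statement), and part (2) is the same contradiction argument in both versions.
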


\begin{proof}
We consider 1. 
The continuity of $f$ means that $f^{-1}(\underline{f(q)})$ is an open subset of $P$. Obviously $q \in f^{-1}(\underline{f(q)})$.
$\underline{q}$ is the smallest open neighborhood of $q$.
Thus $\underline{q}\subset f^{-1}(\underline{f(q)})$.
Hence, we obtain the relation 1.

We consider 2.
Under the assumptions, we assume that there exist $p \in f^{-1}(\{p'\})$ and $q \in f^{-1}(\{q'\})$ such that $p {\leq} q$, or $p \in \underline{q}$.
Hence $f(p) \in f(\underline{q})$.
From 1, $p'=f(p) \in f(\underline{q})\subset \underline{f(q)}=\underline{q'}$.
It contradicts the assumptions.
\end{proof}

\subsubsection{Examples of the continuous map}

\begin{enumerate}

\item For every the poset $(P, \leq)$, and a unit set $\{x\}$,
 every map $f : (P, \leq)\to \{x\}$ is continuous.

\item For every the poset $(P, \leq)$, and a unit set $\{x\}$,
 every map $f : \{x\}\to (P, \leq)$ is continuous.

\item Let $P=\{a, b_1, b_2, c_1, c_2, c_3. c_4\}$, with 
$c_1 \leq b_1$, $c_2\leq b_1$, $c_3 \leq b_2$, $c_4\leq b_2$, $b_1\leq a$, and $b_2\leq a$, and $P'=\{A, B, C\}$ with $C\leq B$, $B \leq A$.
Then $f: P \to P'$ ($f(a)=A$, $f(b_i)=B$, $f(c_i)=C$) is continuous.

\item Assume that $\{0_i\ |\ i = 1, 2, 3\}$  and $\{1_i\ |\ i = 1, 2, 3\}$ such that $0_i \leq 1_i$.
We consider $P=\{j_i \ |\ j=0, 1, i=1,2\}$ and $P'=\{0_3, 1_3\}$ and a map $f : P \to P'$, $(f(j_1)=f(j_2)=j_3$ for $j=0,1$. Then $f$ is continuous.

\end{enumerate}

\begin{remark}
{\rm{
Let $(P, \le)$ and $(P', \le ')$ be posets with the poset topologies $(P, \cT_P)$ and $(P', \cT_{P'})$ and a continuous injection $f: P \to P'$.
The statement that for any $p, q \in P$ such that $f(p) \leq f(q)$, $p \leq q$ does not hold in general.

We have the counter examples.

\begin{enumerate}

\item Let $P=\{a, b, c_1, c_2, c_3\}$, with 
$c_1 \leq a$, $c_2\leq b$, $c_3 \leq b$, $b\leq a$,
and 
$P'=\{a', b', c'_1, c'_2, c'_3\}$, with 
$c_1' \leq b'$, $c_2'\leq b'$, $c_3' \leq b'$, $b'\leq a'$.
The map $f: P \to P'$ ($f(x)=x'$) is continuous because
$f^{-1}(\underline{b'})=\underline{b}\cup \underline{c_1}\in \cT_P$.
We have $f(c_1) \leq f(b)$ but $c_1 \not \leq b$.

\item 
Let $P=\{a, b, c_1, c_2, c_3\}$, with 
 $c_2\leq b$, $c_3 \leq b$, $b\leq a$,
and 
$P=\{a', b', c'_1, c'_2, c'_3\}$, with 
$c_1' \leq b'$, $c_2'\leq b'$, $c_3' \leq b'$, $b'\leq a'$.
The map $f: P \to P'$ ($f(x)=x'$) is also continuous because
$f^{-1}(\underline{b'})=\underline{b}\cup \underline{c_1}\in \cT_P$.
We have $f(c_1) \leq f(b)$ but $c_1 \not \leq b$.

\end{enumerate}
}}
\end{remark}

\begin{lemma}\label{lm:localhome}
Let $(P, \le)$ and $(P', \le ')$ be posets with the poset topologies $(P, \cT_P)$ and $(P', \cT_{P'})$ and a continuous injection $f: P \to P'$.
We assume that for any $p, q \in P$ such that $f(p) \leq f(q)$, $p \leq q$,
Then the following hold.

\begin{enumerate}

\item for $q \in P$, $\underline{f(q)} \cap f(P)=f(\underline{q})$, 

\item the inverse map $f^{-1}:f(P)\to P$ is continuous, and 

\item for any $p, q \in P$ such that $p \not\leq q$, $f(p) \not\leq f(q)$.

\end{enumerate}
\end{lemma}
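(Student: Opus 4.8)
The plan is to exploit that, under the standing hypothesis, $f$ is simultaneously order-preserving and order-reflecting, i.e.\ an order embedding. Continuity already supplies order-preservation by Lemma~\ref{lm:poset_cont} (equivalently $f(\underline{q}) \subset \underline{f(q)}$), while the assumption that $f(p) \leq' f(q)$ implies $p \leq q$ is precisely order-reflection. With this dichotomy in hand, statement (3) is immediate: it is nothing but the contrapositive of the hypothesis, since $p \not\leq q$ forces $f(p) \not\leq' f(q)$ (otherwise the hypothesis would return $p \leq q$). So the real content lies in (1), and I will then deduce (2) from (1).

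For (1), recall from the remark after Lemma~\ref{lm:poset_top} that $\underline{q} = U_q = \{y \in P \mid y \leq q\}$ is the down-set of $q$. The inclusion $f(\underline{q}) \subseteq \underline{f(q)} \cap f(P)$ is just order-preservation combined with the trivial $f(\underline{q}) \subseteq f(P)$. For the reverse inclusion I take $y' \in \underline{f(q)} \cap f(P)$, write $y' = f(p)$ for some $p \in P$ (possible since $y' \in f(P)$), and note $f(p) = y' \leq' f(q)$; the order-reflecting hypothesis then yields $p \leq q$, i.e.\ $p \in \underline{q}$, whence $y' = f(p) \in f(\underline{q})$. This gives the equality. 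It is worth flagging that this reverse inclusion is exactly where the hypothesis is indispensable: the counterexamples in the preceding remark exhibit a continuous injection for which $\underline{f(q)} \cap f(P) = f(\underline{q})$ fails, precisely because order-reflection fails there.

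For (2), I equip $f(P)$ with the subspace topology inherited from $(P', \cT_{P'})$, noting that for finite posets this coincides with the poset topology of the induced order $\leq'$ restricted to $f(P)$ (the minimal open neighborhood of $f(p)$ in either topology is the down-set of $f(p)$ taken within $f(P)$). Since $(f^{-1})^{-1}(U) = f(U)$, continuity of $f^{-1}$ amounts to saying that $f$ carries open sets of $P$ to open sets of $f(P)$. The basic open sets of $\cT_P$ are the $\underline{q}$, and by (1) we have $f(\underline{q}) = \underline{f(q)} \cap f(P)$, which is open in the subspace topology because $\underline{f(q)}$ is open in $P'$. As $f$ preserves arbitrary unions and every open set of $P$ is a union of such $\underline{q}$, each $f(U)$ is open in $f(P)$; hence $f$ is a homeomorphism onto its image and $f^{-1}$ is continuous. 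Alternatively, one may apply Lemma~\ref{lm:poset_cont} directly to $f^{-1}$ viewed as a map of posets $(f(P), \leq') \to (P, \leq)$, for which the required order-preservation is verbatim the standing hypothesis.

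The main obstacle is conceptual rather than computational: ensuring that the reverse inclusion in (1) genuinely invokes order-reflection (and, as the remark shows, cannot hold without it), and being careful that the topology placed on $f(P)$ in (2) is the subspace topology and that it agrees with the poset topology of the restricted order, so that the earlier lemmas apply verbatim. Once the order-embedding viewpoint is fixed, all three assertions are short.
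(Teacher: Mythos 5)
Your proof is correct and follows essentially the same route as the paper's: the reverse inclusion in (1) via the order-reflecting hypothesis, and (2) by matching the bases $f(\underline{q})=\underline{f(q)}\cap f(P)$ of the two topologies (which you spell out more carefully than the paper's terse remark). The only cosmetic difference is that you obtain (3) directly as the contrapositive of the hypothesis, whereas the paper routes it through (1) and injectivity; both are valid.
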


\begin{proof}
We consider the first statement.
The continuity shows that $\underline{f(q)} \cap f(P) \supset f(\underline{q})$.
We show that $\underline{f(q)} \cap f(P) \subset f(\underline{q})$.
Since $f$ is injection, for every $r' \in \underline{f(q)} \cap f(P)$, we find a unique $r$ such that $f(r)=r'$.
From the first statement, $f(r) \leq f(q)$ means $r \leq q$, and thus $r \in \underline{q}$. 
Thus $r' \in f(\underline{q})$.
It means that we have the equality.

We consider the second statement.
$f(P)$ is a sub-poset of P'. We also consider it as a topological space.
The basis of the topologies are bijective. Thus it is obvious.

We consider the third statement that for $p \not\leq q$, i.e., $p \not\in \underline{q}$, we have $f(p) \not \in f(\underline{q})$ and thus $f(p) \not\leq f(q)$.
\end{proof}

\begin{definition}
We refer to the $P$, $P'$ and $f: P \to P'$ satisfying the condition in Lemma \ref{lm:localhome} as an embedding map.

Further, we say that $P$ and $P'$ are homeomorphic if there is a continuous map $f: P \to P'$ is bijection and its inverse map $f^{-1}$ is also continuous; 
we refer to the map as a homeomorphic map.
\end{definition}

\begin{remark}
{\rm{
The embedding map in the topological space $(P, \cT_P)$ means the order preserving map between the posets as the above sense.

It is obvious that if an embedding map $f: P \to P'$ is bijection, $f$ is a homeomorphic map.
}}
\end{remark}

\section{Topological expression of graphs}

\subsection{Review of graph theory}

In this section, we give the topological expression of graphs after we recall the homomorphism and introduce the weakened one.

Let $G=(V, E)$ be a graph,  where $V$ is the set of vertices and $E$ is the set of edges of $G$.
In this paper we only consider simple graphs, i.e., graphs without loops or multiple edges.
Sometimes we denote the set of vertices of a graph $G$ by $V(G)$ and the set of edges by $E(G)$.
If $G'$ is a subgraph of $G$, we write $G' \subset G$.

We recall the graph homomorphism \cite{BR}, which is used in Conway's paper \cite{Conway}.
\begin{definition}\label{def:homomorphism}
Let $G=(V,E)$ and $G'=(V',E')$ be simple graphs, i.e., without loops and multiple edges. 
A homomorphism from $G$ to $G'$ is a map $f:V\to V'$ such that $(f(v), f(v')) \in E'$ whenever $(v, v')\in E$.
The map $f$ is an isomorphism if $f$ is bijective and $(v, v') \in E$ if and only if $(f(v), f(v')) \in E'$.
\end{definition}

Since we are dealing with simple graphs, we note that $(v,v)$ does not belong to $E$ for $G$ in the definition of homomorphism.

In this article, we also prepare the following.

\begin{definition}\label{def:homomorphism2}
Let $G=(V,E)$ and $G'=(V',E')$ be simple graphs.
\begin{enumerate}
\item If $i: V \to V'$ is injective, and $i$ is homomorphism from $G$ to $G'$, then we say that the homomorphism $i$ is injective, and we call $G$ a truncated graph of $G'$.

\item If $p: V \to V'$ is surjective, $p$ is homomorphism from $G$ to $G'$, and $\{(p(v), p(v'))\ $ $| (v, v') \in E\}=E'$, then the homomorphism $p$ is surjective, and we call $G$ a fibered graph of $G'$.
\end{enumerate}
\end{definition}

The truncated graph is sometimes called {\lq}topological minor{\rq} \cite{Diestel}.
In \cite{BR}, the homomorphism $f$ from $G$ to $G'$ such that $f(V(G))=V(G')$ is called the onto-homomorphism, which differs from the above definition of the surjection.

Further we introduce weak homomorphisms.
\begin{definition}\label{df:w-Hom}
Let $G=(V,E)$ and $G'=(V',E')$ be simple graphs.
\begin{enumerate}
\item A w-homomorphism from $G$ to $G'$ is a map $f:V\to V'$ such that $f(v)=f(v')$ or $(f(v), f(v')) \in E'$ whenever $(v, v')\in E$.

\item If $i: V \to V'$ is injective, and $i$ is w-homomorphism from $G$ to $G'$, then then we say that the w-homomorphism $i$ is w-injective, and we call $G$ a  w-truncated graph of $G'$.

\item If $p: V \to V'$ is surjective, $p$ is w-homomorphism from $G$ to $G'$, and $\{(p(v), p(v'))\ $ $| (v, v') \in E, p(v)\neq p(v')\}=E'$, then we say that the w-homomorphism $p$ is surjective and w-projective, and we call $G$ a w-fibered graph of $G'$.
\end{enumerate}
\end{definition}

In this paper we mainly consider the connected graphs, and so we don't need the condition that $p: V \to V'$ is surjective on the w-projection and the projection when we take a connected graph.

The following are obvious.
\begin{proposition}\label{pr:Hom_vs_wHom}
Let $G$, $G'$, $G''$ be simple graphs such that there exist homomorphisms (w-homomorphism) $f: G \to G'$ and $g : G' \to G''$.
Then $g\circ f: G\to G''$ is a homomorphism (w-homomorphism).

If $f: G \to G'$ is a homomorphism, $f$ is a w-homomorphism.

If $f: G \to G'$ is an injective w-homomorphism, $f$ is an injective homomorphism.
\end{proposition}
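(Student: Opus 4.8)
The plan is to verify each of the three assertions directly from Definitions \ref{def:homomorphism} and \ref{df:w-Hom}, organizing the argument as a case analysis on the disjunction ``$f(v)=f(v')$ or $(f(v),f(v'))\in E'$'' that defines a w-homomorphism. Throughout I would exploit the standing assumption that all graphs are simple, so that $(v,v')\in E$ forces $v\neq v'$ and, dually, membership $(f(v),f(v'))\in E'$ forces $f(v)\neq f(v')$; this bookkeeping is what links the homomorphism and w-homomorphism conditions.

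For the composition statement I would fix an edge $(v,v')\in E$. In the homomorphism case this immediately gives $(f(v),f(v'))\in E'$, and applying $g$ yields $(g(f(v)),g(f(v')))\in E''$, so $g\circ f$ is a homomorphism. In the w-homomorphism case I would split on the two alternatives for $f$: if $f(v)=f(v')$ then $g(f(v))=g(f(v'))$; if instead $(f(v),f(v'))\in E'$, then the w-homomorphism $g$ provides either $g(f(v))=g(f(v'))$ or $(g(f(v)),g(f(v')))\in E''$. In every branch the defining condition for $g\circ f$ as a w-homomorphism holds.

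For the second assertion I would simply note that the defining condition of a homomorphism, namely $(f(v),f(v'))\in E'$, is exactly the second disjunct appearing in the definition of a w-homomorphism; hence whenever $(v,v')\in E$ the required disjunction is satisfied, and every homomorphism is automatically a w-homomorphism.

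The only step calling for a moment's care is the third assertion, which is where simplicity and injectivity interact. Given an edge $(v,v')\in E$, simplicity gives $v\neq v'$; the w-homomorphism condition offers $f(v)=f(v')$ or $(f(v),f(v'))\in E'$; and injectivity rules out the first alternative, since $f(v)=f(v')$ would force $v=v'$, a contradiction. Hence $(f(v),f(v'))\in E'$, and $f$ is a homomorphism. I expect no genuine obstacle anywhere in this proposition—the statement is flagged as obvious—but this final point is the one place where something beyond a pure unwinding of definitions is used, namely the observation that an injective map cannot collapse the two endpoints of an edge.
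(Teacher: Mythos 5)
Your proof is correct; the paper offers no argument at all (it simply declares the proposition obvious), and your direct case analysis on the disjunction in Definition \ref{df:w-Hom}, together with the observation that injectivity plus simplicity rules out the collapsing alternative in the third claim, is exactly the intended verification.
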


The surjective w-homomorphism is related to the contraction of graphs \cite{Diestel}.

\begin{lemma}
For any simple graph $G=(V, E)$, $V\neq \emptyset$, the edgeless graph  is a truncated and w-fibered graph of $G$.
\end{lemma}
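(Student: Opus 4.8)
\emph{Proof proposal.} The plan is to exhibit one explicit vertex map for each of the two assertions and then to check the defining conditions of Definitions~\ref{def:homomorphism2} and \ref{df:w-Hom} directly; both constructions are short, so the only real work is applying the definitions with the correct source and target. Throughout, write $G_0 := (V, \emptyset)$ for the edgeless graph on the same vertex set $V$.

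For the truncated claim I would take the identity $\mathrm{id}_V : V \to V$ and read it as a map $G_0 \to G$. Because $G_0$ has no edges, the homomorphism requirement of Definition~\ref{def:homomorphism} (that $(f(v),f(v')) \in E$ whenever the pair is an edge of the source) is vacuous, and $\mathrm{id}_V$ is plainly injective; hence it is an injective homomorphism and $G_0$ is a truncated graph of $G$ by Definition~\ref{def:homomorphism2}(1). For the w-fibered claim I would use the collapse map $c : G \to (\{*\}, \emptyset)$ sending every vertex of $V$ to a single vertex $*$. Here the point is that $c$ is a w-homomorphism but not a homomorphism: for each edge $(v,v') \in E$ one has $c(v) = * = c(v')$, so the clause ``$f(v) = f(v')$ or $(f(v),f(v')) \in E'$'' of Definition~\ref{df:w-Hom}(1) is satisfied by its first alternative. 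The map $c$ is surjective onto $\{*\}$, and since the target carries no edges the edge-covering condition of Definition~\ref{df:w-Hom}(3) reduces to the statement that collapsing the endpoints of every edge leaves no genuine (non-loop) edge, which is again exactly $c(v) = c(v')$. This realises the edgeless graph as the w-fibered contraction of $G$, matching the remark that surjective w-homomorphisms encode graph contraction.

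The step I expect to be the only real obstacle is not a computation but the bookkeeping of \emph{direction}: Definition~\ref{def:homomorphism2}(1) presents a truncated graph via an injection \emph{into} $G$, whereas Definition~\ref{df:w-Hom}(3) presents a w-fibered graph via a surjection, so the two witnessing maps must run in opposite directions. I would therefore state explicitly, in each application, which graph is the source and which is the target, and confirm that taking the edgeless target to be the single vertex $(\{*\}, \emptyset)$ (so that $E' = \emptyset$ and there are no loops to worry about) is what makes the edge-image clause $\{(p(v),p(v')) \mid \dots\} = E'$ hold. The hypothesis $V \neq \emptyset$ enters only to guarantee that $\mathrm{id}_V$ and $c$ have a nonempty domain and that the target vertex $*$ exists; with these conventions fixed, both verifications are immediate.
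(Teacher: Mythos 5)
Your proposal is correct and takes essentially the same route as the paper's one-line proof, namely exhibiting an explicit injective homomorphism from an edgeless graph into $G$ together with the surjective collapse w-homomorphism $G \to (\{*\},\emptyset)$. The only discrepancy is that the paper reads ``the edgeless graph'' as the single-vertex graph and witnesses the truncated half by the one-point inclusion $i:(\{v\},\emptyset)\to G$, whereas you use $\mathrm{id}_V$ on $(V,\emptyset)$; if the lemma is meant to assert that one and the same edgeless graph is both a truncated and a w-fibered graph of $G$, take the single-vertex graph in both halves --- your argument already contains this as a special case, so nothing substantive is missing.
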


\begin{proof}
There exist a homomorphisms $i : (\{v\}, \emptyset) \to G$ and a w-homomorphism $p :G \to (\{v\}, \emptyset)$.
\end{proof}

The homomorphism is characterized by the following proposition.
\begin{proposition}
Let $G=(V,E)$ and $G'=(V',E')$ be simple graphs such that there is a homomorphism $f: G \to G'$.
For every $v' \in V'$, there is no pair $(v_1, v_2)$ in $f^{-1}(\{v'\})$ such that $(v_1, v_2) \in E$.
\end{proposition}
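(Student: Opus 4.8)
The plan is to argue by contradiction, exploiting the defining property of a homomorphism together with the fact that $G'$ is a \emph{simple} graph and hence carries no loops. First I would suppose, toward a contradiction, that for some $v' \in V'$ there does exist a pair $(v_1, v_2)$ with $v_1, v_2 \in f^{-1}(\{v'\})$ and $(v_1, v_2) \in E$.

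Next, I would apply Definition \ref{def:homomorphism} to the edge $(v_1, v_2) \in E$, which forces $(f(v_1), f(v_2)) \in E'$. Since $v_1, v_2 \in f^{-1}(\{v'\})$, by definition of the fiber we have $f(v_1) = v' = f(v_2)$, so the previous membership reads $(v', v') \in E'$. This is precisely a loop at $v'$ in $G'$.

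The contradiction is then sealed by invoking simplicity of the codomain: $G'$ is a simple graph, so by definition $(w,w) \notin E'$ for every $w \in V'$, and in particular $(v', v') \notin E'$. This contradicts the conclusion of the preceding step, so no such pair $(v_1, v_2)$ can exist, which is exactly the assertion that each fiber $f^{-1}(\{v'\})$ spans no edge of $G$.

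I do not expect any genuine obstacle here, since the statement is an immediate unwinding of the definitions. The one point I would make explicit is that the argument depends entirely on the no-loop condition built into the notion of a simple graph \emph{for the target} $G'$; were $G'$ permitted to carry a loop at $v'$, a fiber could indeed contain an edge. I would therefore phrase the final step so that the use of simplicity of $G'$ is clearly visible.
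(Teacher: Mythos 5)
Your argument is correct and is exactly the paper's proof: the paper's one-line justification ("if there is such an edge $(v_1,v_2)$, there is a loop $(v',v')\in E'$") is precisely the contradiction you spell out, resting on the no-loop condition in the definition of a simple graph for the target $G'$. Your version merely makes the implicit use of Definition \ref{def:homomorphism} and of simplicity explicit, which is a fair expansion rather than a different route.
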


\begin{proof}
If there is such an edge $(v_1, v_2)$, there is a loop $(v', v') \in E'$.
\end{proof}

We show two examples, a case that there is a homomorphism, and a case that there is a w-homomorphism but not homomorphism in Figure \ref{fg:Hom_vs_wHom}.

\begin{figure}
\begin{center}
\includegraphics[width=0.35\hsize]{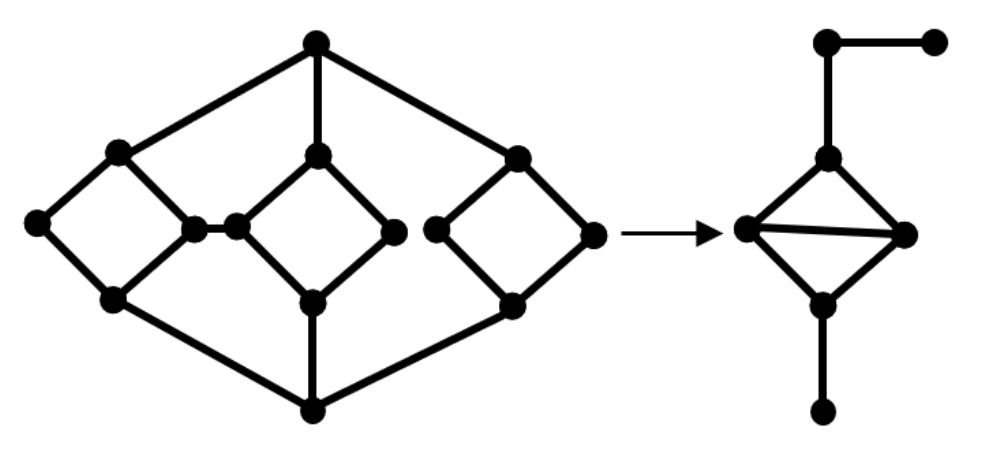}
\hskip 0.1\hsize
\includegraphics[width=0.35\hsize]{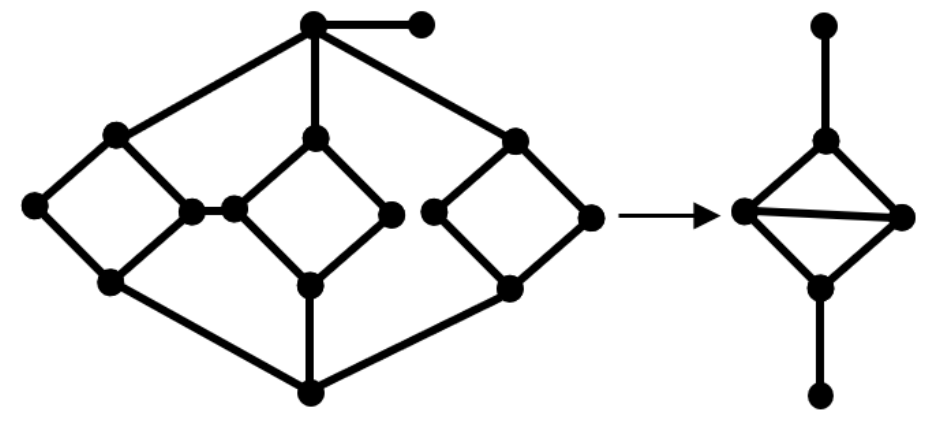}

(a) \hskip 0.5\hsize (b)

\end{center}

\caption{
(a) There is a homomorphism, but (b) there is a w-homomorphism but not homomorphism.
}\label{fg:Hom_vs_wHom}
\end{figure}

\begin{remark}\label{rmk:Hom_wHome}
{\rm{
We comment on the relation between homomorphism and w-homomorphism.

For a simple graph $G=(V, E)$, we consider a graph with loops $(v, v)$ for all $v\in V$.
Let $E^\circ:= E\cup\{(v, v)\ | v\in V\}$ and $G^\circ:=(V, E^\circ)$.
Then we naturally define the map $\varpi : G^\circ \to G$: 
$\varpi|_{(V, E)}=\mathrm{id}$ and $\varpi((v,v))\mapsto v$.

If we apply the statement in the definition of homomorphism to a correspondence $G \to G^{\prime\circ}$, i.e., that for $(v, v') \in E$, $(f(v), f(v')) \in E^{\prime \circ}$, and then we consider its projection by $\varpi$, we obtain the w-homomorphism.
In other words, when we go beyond simple graphs to non-simple graphs, the difference between homomorphism and w-homomorphism disappears.
}}
\end{remark}

\begin{remark}\label{rmk:Hom_wHome2}
Using the w-homomorphism, we may consider a w-projective sequence of graphs.
\begin{equation}
\xymatrix{ 
G_{s}=G_0 \ar[r]^{p_0} & \cdots  \ar[r]^{p_{i-1}}& G_i \ar[r]^{p_i} & G_{i+1}
 \ar[r]^{p_{i+1}}& {\cdots} \ar[r]^{p_t\ \ \ \  } & G_{t}=(\{v\},\emptyset),
 }
\end{equation}
where $p_i$ is a surjective w-homomorphism.

Then we can handle the morphism between the sequences by the commutative diagram.
\begin{equation}
\xymatrix{ 
G_{s}=G_0 \ar[r]^{p_0}\ar[d]^{f_0} & \cdots  \ar[r]^{p_{i-1}}& G_i  \ar[r]^{p_i}\ar[d]^{f_{i}} & G_{i+1} \ar[r]^{p_{i+1}}\ar[d]^{f_{i+1}} 
&{\cdots} \ar[r]^{p_t\ \ \ \  } &  G_{t}=(\{v\},\emptyset)\ar[d]^{f_{t}} \\
G'_{s}=G'_0 \ar[r]^{p'_0} & \cdots  \ar[r]^{p'_{i-1}}& G'_i  \ar[r]^{p'_i}& G'_{i+1} \ar[r]^{p'_{i+1}} &{\cdots}  \ar[r]^{p'_t\ \ \ \ } & G'_{t}=(\{v'\},\emptyset),
 }
\end{equation}
where $p_i$ and $p'_i$ are surjective w-homomorphisms and $f_i$ is a homomorphism or w-homomorphism.
When we consider such sequences, we cannot express them only by the homomorphisms in general.

\end{remark}

\subsection{Topology of graph}

To handle the relations between graphs more flexibly, we introduce the topology of graphs for finite graphs.

\begin{lemma}
For a finite connected graph $G$, we define the set $\cC_G$ as the collection of connected subgraphs of $G$. 
Then $\cC_G$ is a poset for $\subset$.
\end{lemma}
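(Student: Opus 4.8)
The plan is to verify directly the three defining properties of a partial order from the earlier definition—reflexivity, anti-symmetry, and transitivity—for the subgraph relation $\subset$ restricted to the collection $\cC_G$ of connected subgraphs. The key observation is that a subgraph $G' = (V', E')$ of $G = (V, E)$ is completely determined by a vertex set $V' \subseteq V$ together with an edge set $E' \subseteq E$ (subject to the incidence constraint that the endpoints of each edge of $E'$ lie in $V'$), and that the relation $G_1 \subset G_2$ between two such subgraphs amounts to the conjunction $V_1 \subseteq V_2$ and $E_1 \subseteq E_2$. Thus the subgraph relation is inherited componentwise from ordinary set inclusion, and the whole lemma reduces to the fact that $\subseteq$ is the prototypical partial order.

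First I would record reflexivity: for any connected subgraph $G' = (V', E')$ we trivially have $V' \subseteq V'$ and $E' \subseteq E'$, so $G' \subset G'$. Next, for anti-symmetry, suppose $G_1 \subset G_2$ and $G_2 \subset G_1$; then $V_1 \subseteq V_2$, $V_2 \subseteq V_1$, $E_1 \subseteq E_2$ and $E_2 \subseteq E_1$, whence $V_1 = V_2$ and $E_1 = E_2$ by anti-symmetry of $\subseteq$, giving $G_1 = G_2$. Finally, for transitivity, if $G_1 \subset G_2$ and $G_2 \subset G_3$, then the inclusion chains $V_1 \subseteq V_2 \subseteq V_3$ and $E_1 \subseteq E_2 \subseteq E_3$ collapse to $V_1 \subseteq V_3$ and $E_1 \subseteq E_3$ by transitivity of $\subseteq$, so $G_1 \subset G_3$.

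Finally I would note that restricting attention to \emph{connected} subgraphs causes no difficulty: each poset axiom is a universally quantified condition on elements of $\cC_G$, so these conditions are automatically inherited by any subcollection of the collection of all subgraphs of $G$. The hypotheses that $G$ is finite and connected play no role in establishing the poset structure itself; finiteness will only become relevant afterward, when one wishes to equip $\cC_G$ with the poset topology of Lemma \ref{lm:poset_top}, and connectedness is imposed so that $\cC_G$ models the intended geometric object. There is in essence no real obstacle here—the only point worth care is the reduction of the subgraph relation to componentwise set inclusion on the vertex and edge sets, after which the three verifications are immediate.
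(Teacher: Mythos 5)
Your verification is correct and is exactly the routine argument the paper implicitly relies on: the paper states this lemma without proof, treating it as obvious that the subgraph relation (componentwise set inclusion on vertices and edges) inherits reflexivity, anti-symmetry, and transitivity from $\subseteq$, and that these universally quantified axioms pass to the subcollection $\cC_G$ of connected subgraphs. Your added remarks that finiteness and connectedness play no role in the poset structure itself are accurate and consistent with how the paper uses the lemma afterwards.
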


\begin{definition}
For a finite connected graph $G$ and its collection of the connected subgraphs $\cC_G$, which is a poset with respect to $\subset$,
let the poset topology on $\cC_G$ induced from $\subset$ be denoted by $\cT_G$
and call it the graph topology of $G$. 
The topological space is denoted by $(G, \cC_G, \cT_G)$.
\end{definition}

We note that the this definition slightly differs from \cite{AN2023}.

In this article, we handle only connected simple graphs, without loops and multiple edges and thus, we sometimes refer to them as graphs without stating the connected and simple conditions hereafter.
Examples of the collection of connected subgraphs with the partial order $\subset$ are illustrated in Figure \ref{fg:SubGrap1}.

\begin{figure}
\begin{center}
\includegraphics[width=0.15\hsize]{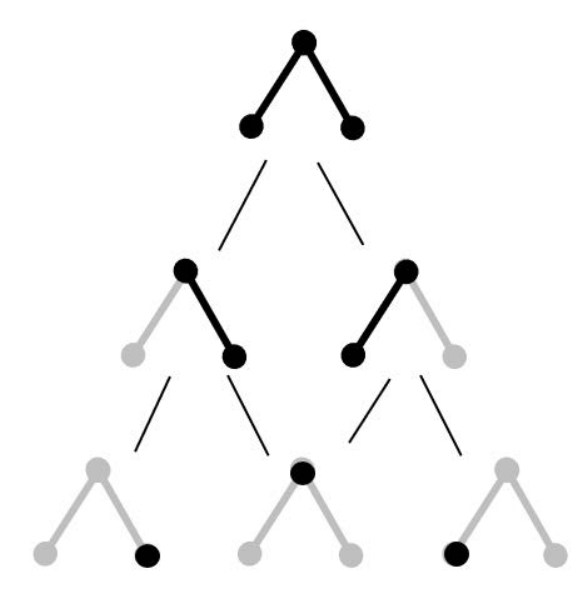}
\hskip 0.1\hsize
\includegraphics[width=0.40\hsize]{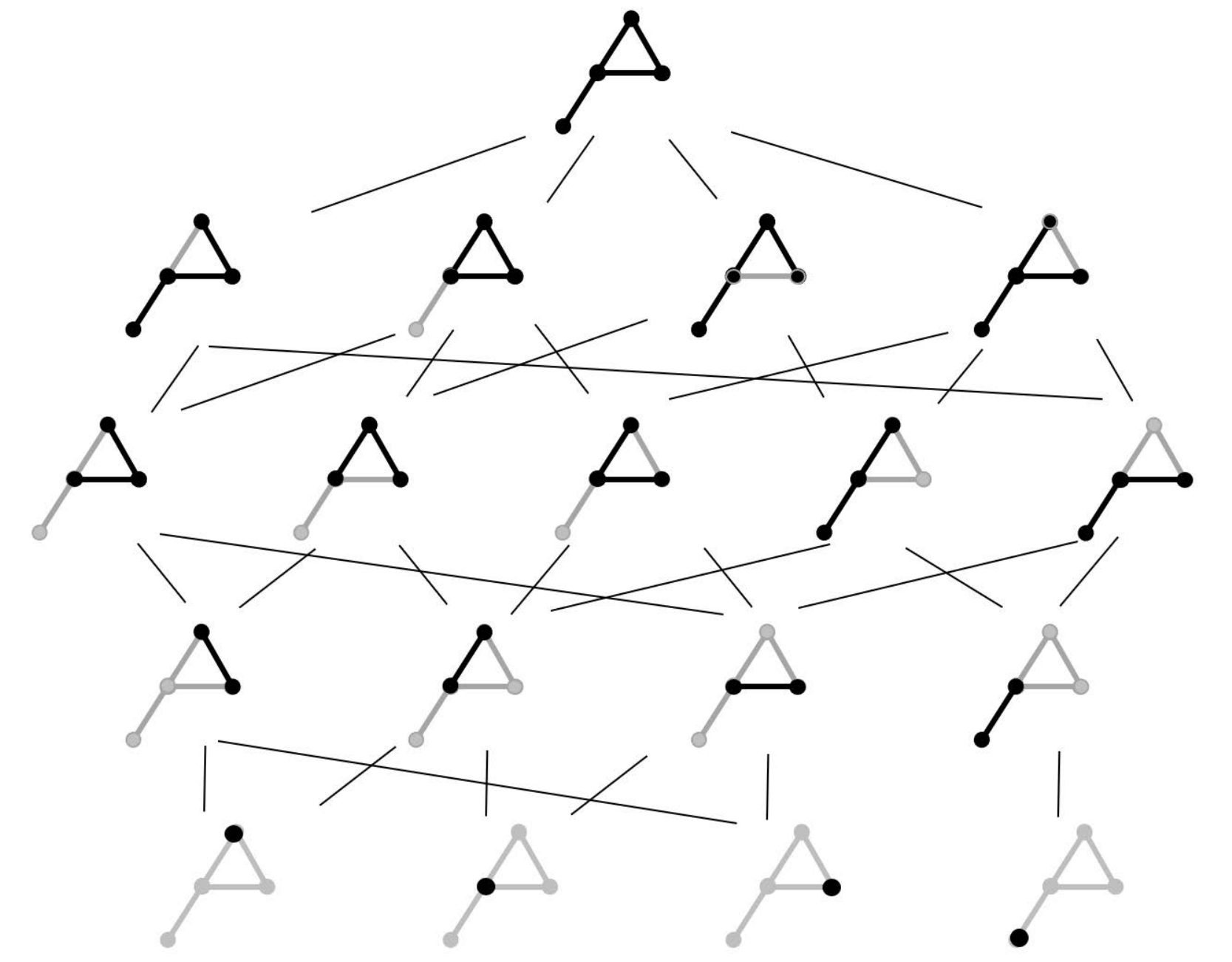}

(a) \hskip 0.3\hsize (b)

\includegraphics[width=0.15\hsize]{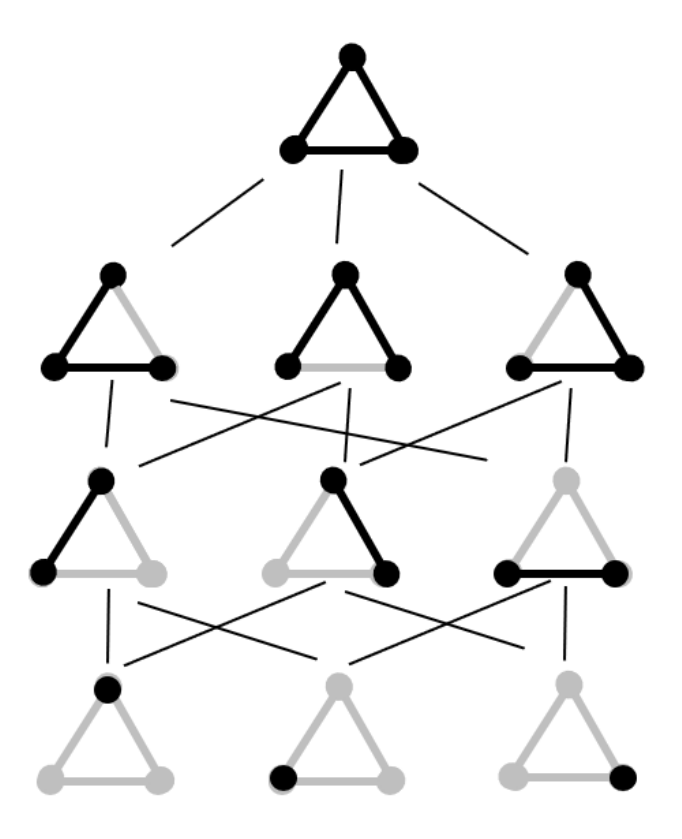}
\hskip 0.1\hsize
\includegraphics[width=0.25\hsize]{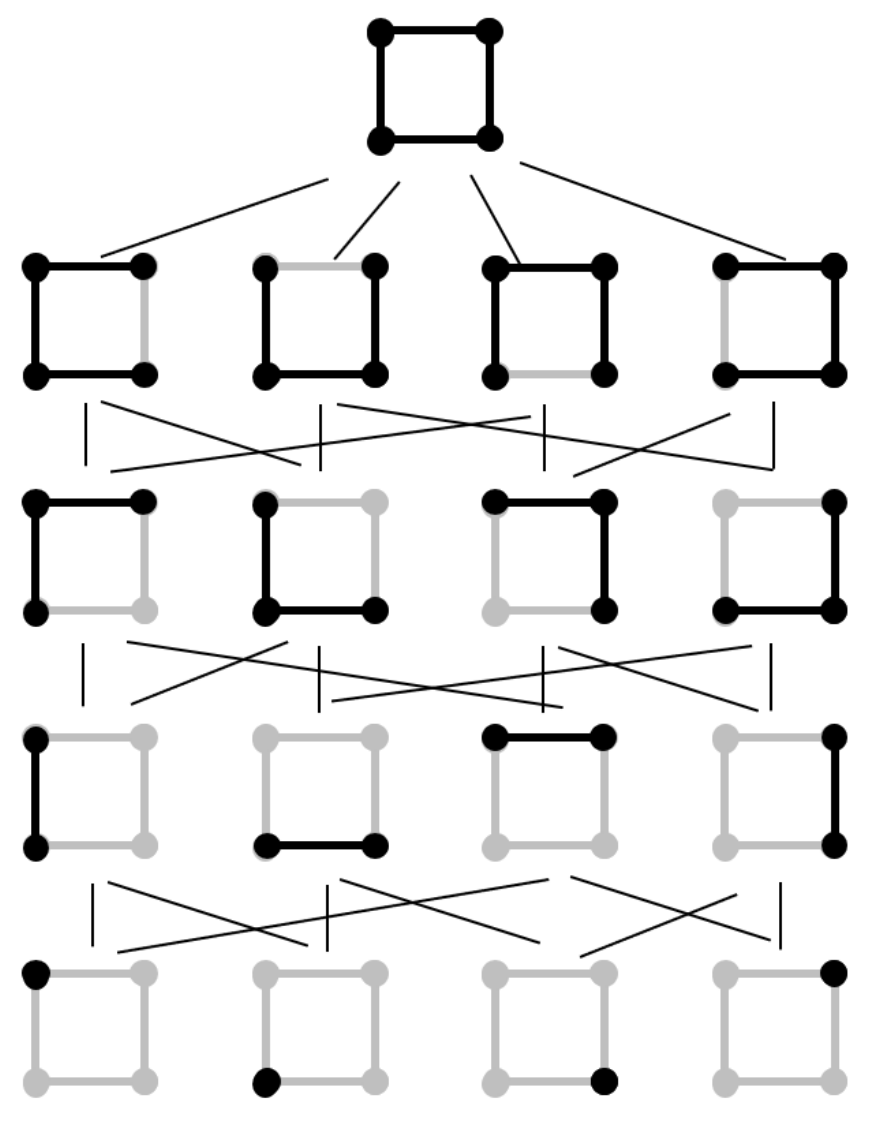}
\hskip 0.1\hsize
\includegraphics[width=0.30\hsize]{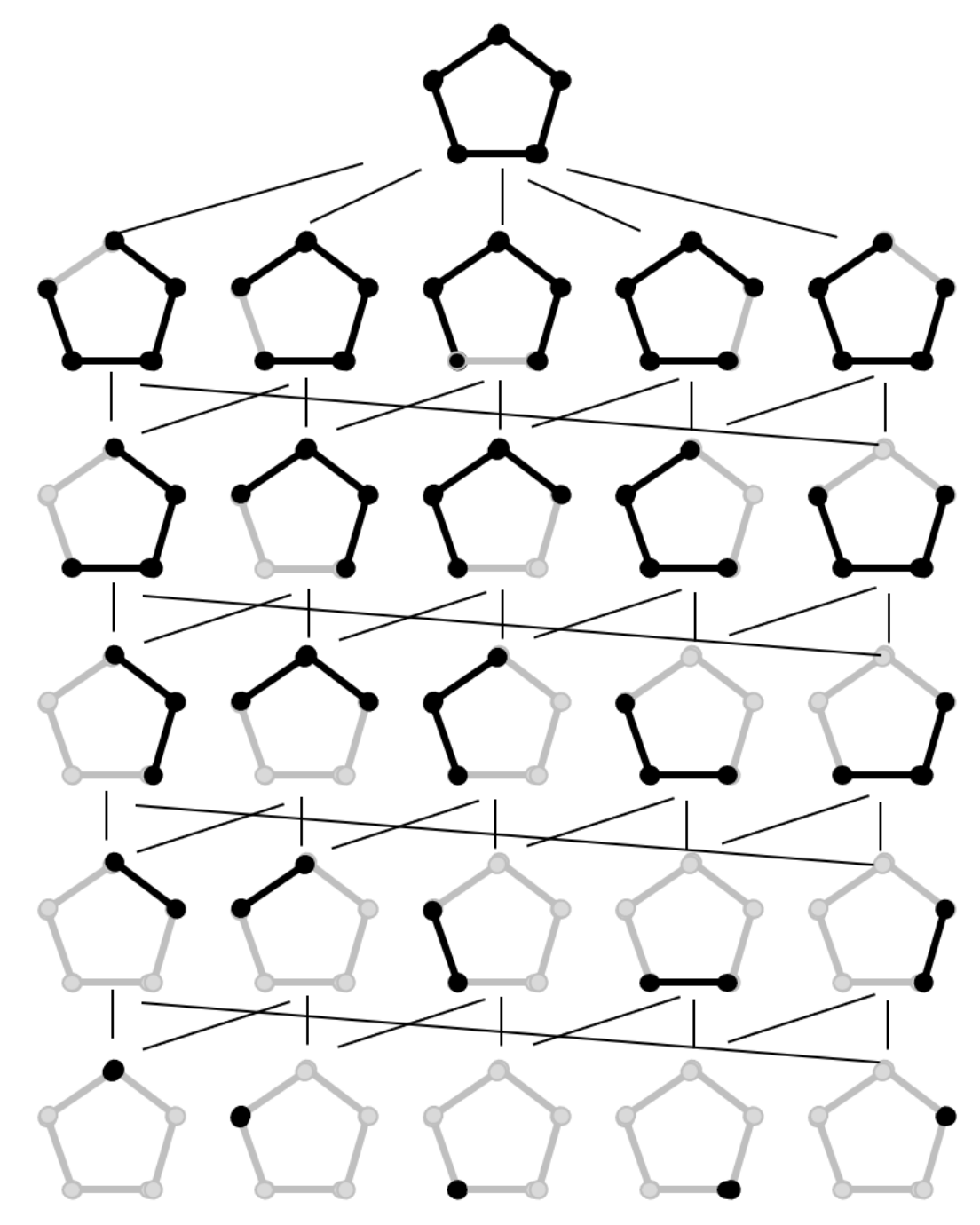}

(c) \hskip 0.3\hsize (d)
\hskip 0.3\hsize
 (e)

\end{center}

\caption{
Examples of the collection of connected subgraphs with the partial order $\subset$ are illustrated.
}\label{fg:SubGrap1}
\end{figure}

\begin{lemma}
For a given $\cC_G$, we recover $G$ as a pair of sets $(V,E)$.
\end{lemma}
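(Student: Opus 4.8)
The plan is to read off the vertex set $V$ and the edge set $E$ directly from the order structure of the poset $(\cC_G, \subset)$; by Lemma \ref{lm:poset_top} this same information is encoded in the graph topology $\cT_G$, so it suffices to work with the inclusion order.

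First I would identify $V$ with the minimal elements of $\cC_G$. Every single-vertex subgraph $(\{v\}, \emptyset)$ is connected, hence lies in $\cC_G$, and it is minimal because the empty graph is not connected and so is not a member of $\cC_G$, leaving no strictly smaller connected subgraph below it. Conversely, any connected subgraph $H$ with at least two vertices properly contains $(\{v\}, \emptyset)$ for each vertex $v$ of $H$, so $H$ is not minimal. Thus the minimal elements of $(\cC_G, \subset)$ are exactly the single-vertex subgraphs, and $v \mapsto (\{v\}, \emptyset)$ is a bijection onto them, which recovers $V$.

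Next I would record that for any $H \in \cC_G$ the minimal elements contained in $H$ are precisely the singletons $(\{v\}, \emptyset)$ with $v$ a vertex of $H$, since $(\{v\}, \emptyset) \subset H$ holds if and only if $v$ is a vertex of $H$. Hence the number of minimal elements in the down-set of $H$ equals the number of vertices of $H$, and this count is determined by the order alone. I would then recover $E$ as those elements of $\cC_G$ whose down-set contains exactly two minimal elements, reading off the two endpoints as those two minimal elements. The crucial step is that if $H \in \cC_G$ has exactly two vertices $u, v$, then connectedness forces the edge $(u,v)$ to belong to $H$, and having only two vertices forces $H$ to equal $(\{u,v\}, \{(u,v)\})$; such an $H$ exists in $\cC_G$ if and only if $(u,v) \in E$, and it is then unique. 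Therefore sending an edge to the two-vertex connected subgraph it spans is a bijection between $E$ and the set of elements of $\cC_G$ lying above exactly two minimal elements, and this recovers $E$ together with its incidence with $V$.

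The only real content is this two-vertex step, namely that \emph{``exactly two minimal elements below''} singles out edges and nothing else. It rests on two facts: a connected two-vertex graph must contain the connecting edge, and a simple graph has no multiple edges, so no two distinct elements of $\cC_G$ lie above the same pair of minimal elements. Once these are in place the recovery of $(V,E)$ is purely order-theoretic, so the main obstacle is merely to state and justify these observations carefully rather than any genuinely difficult argument.
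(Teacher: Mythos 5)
Your proof is correct, and it lands on the same basic observation as the paper's: the single-vertex subgraphs and the single-edge subgraphs both sit inside $\cC_G$, and these are exactly $V$ and $E$. The paper's proof stops there --- it simply notes that $\cC_G$ ``contains $V$ and $E$ as subgraphs of $G$,'' implicitly using the fact that the elements of $\cC_G$ are concretely given as subgraphs, so one can just look at each element and see whether it is a singleton or a single edge. What you add is an intrinsic, purely order-theoretic characterization: the vertices are the minimal elements of $(\cC_G,\subset)$, and the edges are the elements whose down-set contains exactly two minimal elements, with simplicity guaranteeing that such an element is unique for each pair and connectedness forcing it to contain the joining edge. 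This buys something real: your argument shows that $G$ is recoverable from the abstract poset (hence, via Lemma \ref{lm:poset_top}, from the topology $\cT_G$ alone), which is what the paper actually needs for the subsequent lemma recovering $\cC_G$ from $\cT_G$ and for Lemma \ref{lm:subGcontIII} on homeomorphic collections; the paper's one-line proof leaves that step to the reader. The only point worth stating explicitly in your write-up is that the empty graph is excluded from $\cC_G$, which you do note and which is what makes the singletons minimal.
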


\begin{proof}
In $\cC_G$, there exists $G$, all edgeless graphs $(\{v\},\emptyset)$ for every $v\in V$ and all single edge graph $(\{v_1, v_2\}, \{e\})$ for every $e=(v_1, v_2)\in E$. In other words, $\cC_G$ contains $V$ and $E$ as subgraphs of $G$.
\end{proof}

Lemma \ref{lm:poset_top} leads us to have the following lemma.

\begin{lemma}
For a given $\cT_G$, we recover $\cC_G$.
\end{lemma}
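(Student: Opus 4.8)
The plan is to recognize this lemma as nothing more than the specialization of Lemma~\ref{lm:poset_top} to the particular poset $(\cC_G, \subset)$. By definition, $\cT_G$ is precisely the poset topology attached to the finite poset $(\cC_G, \subset)$, so ``recovering $\cC_G$'' amounts to reconstructing, from the topological data alone, both the underlying set of connected subgraphs and the inclusion order $\subset$ on it.

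First I would recover the underlying point set. A topology on a set $X$ always contains $X$, and $X$ is the union of all its open sets; hence the set $\cC_G$ of connected subgraphs is recovered as the largest element of $\cT_G$, equivalently as $\bigcup_{U\in\cT_G} U$. The points of the space are therefore exactly the elements of $\cC_G$. Next I would recover the order. Since $\cC_G$ is finite, for each point $C\in\cC_G$ the set $\underline{C}:=\bigcap_{U\in\cT_G:\, C\in U} U$ is open and is the minimal open neighborhood of $C$, and it is computable from $\cT_G$ alone. Following the construction in Lemma~\ref{lm:poset_top}, together with the subsequent observation that $\underline{p}=U_p$ in a poset topology, I would then set $C\subset C'$ exactly when $\underline{C}\subset\underline{C'}$; Lemma~\ref{lm:poset_top} guarantees that the relation so reconstructed coincides with the original inclusion order, so the poset $(\cC_G,\subset)$ is fully determined by $\cT_G$.

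I do not anticipate a genuine obstacle: the content is entirely a transcription of Lemma~\ref{lm:poset_top} into the language of the graph topology, and the only thing to check is that both the set and the order are retrieved, which the two steps above supply. Combined with the preceding lemma, which recovers $G=(V,E)$ from $\cC_G$, this closes the loop and shows that a finite connected graph is completely determined by its graph topology $\cT_G$.
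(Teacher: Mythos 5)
Your proposal is correct and matches the paper's intent exactly: the paper offers no separate argument, simply noting that Lemma~\ref{lm:poset_top} yields this statement, and your two steps (recovering the point set as the union of opens, then recovering $\subset$ from the minimal open neighborhoods $\underline{C}$) are precisely the specialization of that lemma to the poset $(\cC_G,\subset)$. Nothing further is needed.
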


We introduce morphisms from $(G, \cC_G, \cT_G)$ to $(G', \cC_{G'}, \cT_{G'})$.

\begin{definition}
Let $(G, \cC_G, \cT_G)$ and $(G', \cC_{G'}, \cT_{G'})$ be the finite connected graphs with their graph topologies.
\begin{enumerate}

\item A map $f :\cC_G \to \cC_{G'}$ is continuous, if $f$ is continuous as the poset  $(\cC_G, \subset)$ and $(\cC_{G'}, \subset)$.

\item A map $f :\cC_G \to \cC_{G'}$ is a continuous injection, if $f$ is a continuous injection as the poset  $(\cC_G, \subset)$ and $(\cC_{G'}, \subset)$.

\item A map $f :\cC_G \to \cC_{G'}$ is embedding, if $f$ is embedding as the poset  $(\cC_G, \subset)$ and $(\cC_{G'}, \subset)$.

\item A bijection $f :\cC_G \to \cC_{G'}$ is homeomorphic, if $f$ is embedding.
If there is a homeomorphic map between $\cC_G$ and  $\cC_{G'}$, we say that $\cC_G$ and  $\cC_{G'}$ are homeomorphic.

\end{enumerate}
\end{definition}

We illustrate some examples of the continuous injection in Figure \ref{fg:SubGrap2} and a pair that there is no embedding in Figure \ref{fg:SubGrap3}.

Following three lemmas are obvious.

\begin{lemma}\label{lm:singleVer}
For an edgeless graph $G_v:=(\{v\},\emptyset)$ and any simple graph $G$, a map $f$ from $\cC_{G_v}$ to the collection connected subgraphs $\cC_G$ of $G$ is continuous.
\end{lemma}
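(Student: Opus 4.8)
The plan is to reduce the statement to the observation that the domain collection $\cC_{G_v}$ is a single point, after which continuity is immediate from the order-preservation criterion of Lemma~\ref{lm:poset_cont}. First I would determine the elements of $\cC_{G_v}$. Since $G_v=(\{v\},\emptyset)$ has a single vertex and no edges, its only nonempty connected subgraph is $G_v$ itself; under the convention used throughout the paper that connected graphs are nonempty (the recovery lemma for $\cC_G$ lists single-vertex graphs but never the empty graph), the collection $\cC_{G_v}$ consists of the single element $G_v$. Hence $(\cC_{G_v},\subset)$ is a one-element poset, and its graph topology $\cT_{G_v}$ is the (unique) topology on a point.

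Next I would invoke Lemma~\ref{lm:poset_cont}, which characterizes continuity of a map between poset topological spaces as order preservation. For the singleton $\cC_{G_v}$ the only pair $(p,q)$ with $p\subset q$ is $(G_v,G_v)$, and the required inequality $f(G_v)\subset f(G_v)$ holds trivially by reflexivity of $\subset$ on $\cC_G$. Thus any map $f:\cC_{G_v}\to\cC_G$ satisfies the hypothesis of Lemma~\ref{lm:poset_cont} and is therefore continuous. Equivalently, one may simply cite item~(2) of the subsection on examples of continuous maps, taking the unit set there to be $\cC_{G_v}$; the present lemma is exactly that instance.

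There is essentially no genuine obstacle: all the content is carried by the fact that a single-vertex edgeless graph has a unique connected subgraph, collapsing the source to a point, and every map out of a one-point space is continuous. The only point demanding care is the convention for connectedness of the empty graph. If one were to admit the empty subgraph into $\cC_{G_v}$, the domain would instead be the two-element chain $\emptyset\subset G_v$, and one would additionally need to check $f(\emptyset)\subset f(G_v)$; in that variant the claim ``every map is continuous'' could fail. Because the paper restricts attention to connected, hence nonempty, subgraphs, this complication does not arise, and the one-point argument applies directly.
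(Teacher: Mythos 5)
Your argument is correct and is exactly the reasoning the paper leaves implicit (the paper simply declares this lemma ``obvious'' with no written proof): $\cC_{G_v}$ is a one-element poset, so every map out of it is continuous, which is precisely item~(2) of the paper's examples of continuous maps and follows at once from Lemma~\ref{lm:poset_cont}. Your side remark about excluding the empty subgraph is a sensible precaution and consistent with the paper's convention, so nothing further is needed.
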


\begin{lemma}\label{lm:singleVer2}
For an edgeless graph $G_v:=(\{v\},\emptyset)$  and any simple graph $G$, a map $f$ from the collection connected subgraphs $\cC_G$ of $G$ to $\cC_{G_v}$ is continuous.
\end{lemma}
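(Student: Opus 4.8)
The plan is to reduce the claim to the already-recorded fact that every map into a one-point poset is continuous (item (1) in the list of examples of continuous maps). First I would make the target explicit: since $G_v = (\{v\}, \emptyset)$ has a single vertex and no edges, its only connected subgraph is $G_v$ itself, so the collection $\cC_{G_v}$ is a singleton, say $\{x\}$, carrying the trivial poset structure.

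With this identification in hand, I would apply Lemma \ref{lm:poset_cont}, which says that $f : \cC_G \to \cC_{G_v}$ is continuous precisely when it preserves the order, i.e.\ $p \subset q$ in $\cC_G$ implies $f(p) \subset f(q)$ in $\cC_{G_v}$. Because $\cC_{G_v} = \{x\}$, every value of $f$ equals $x$; hence for any $p \subset q$ we have $f(p) = f(q) = x$, and reflexivity of the order gives $x \subset x$. The order-preservation condition therefore holds trivially, and $f$ is continuous.

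Equivalently, and perhaps more transparently, I would verify continuity straight from the definition of topology: the open sets of $(\cC_{G_v}, \cT_{G_v})$ are only $\emptyset$ and $\{x\}$, and their preimages under $f$ are $\emptyset$ and $\cC_G$, both of which lie in $\cT_G$.

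I do not expect any genuine obstacle. The entire content lies in the first step --- recognizing that the edgeless one-vertex graph yields a one-element collection of connected subgraphs --- after which continuity into a singleton poset is immediate from either Lemma \ref{lm:poset_cont} or the definition, exactly paralleling the proof of Lemma \ref{lm:singleVer} for the opposite direction.
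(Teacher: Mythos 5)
Your proof is correct: the paper gives no argument for this lemma (it is one of three declared ``obvious''), and your reasoning --- that $\cC_{G_v}$ is a singleton, so the map is constant and continuity follows either from Lemma \ref{lm:poset_cont} or by checking preimages of the only open sets $\emptyset$ and $\{x\}$ --- is exactly the content of the paper's example (1) of continuous maps (every map from a poset into a one-point set is continuous), which is what the authors are implicitly invoking. No gap; this matches the intended argument.
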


\begin{lemma}\label{lm:subGcont}
Let $G$ be a subgraph of $G'$, i.e., $G \subset G'$
Then we can find a continuous injection $f$ from $\cC_{G}$ to $\cC_{G'}$.
\end{lemma}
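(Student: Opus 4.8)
The plan is to exhibit the natural inclusion map and verify it has the required properties, invoking Lemma \ref{lm:poset_cont} to reduce continuity to a purely order-theoretic condition. The key observation is that, since by that lemma a map between poset topological spaces is continuous if and only if it preserves the order, the entire statement amounts to producing an injective order-preserving map from $\cC_G$ to $\cC_{G'}$.

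First I would observe that because $G \subset G'$, every connected subgraph $H$ of $G$ is itself a connected subgraph of $G'$: its vertex set is contained in $V(G) \subset V(G')$ and its edge set is contained in $E(G) \subset E(G')$, while connectedness is an intrinsic property of $H$ that does not depend on the ambient graph in which $H$ is viewed. Consequently $\cC_G \subset \cC_{G'}$ as sets, and I would take $f$ to be the inclusion $H \mapsto H$.

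Injectivity is then immediate, since $f$ acts as the identity on the elements of $\cC_G$. For continuity, I would recall from Lemma \ref{lm:poset_cont} that continuity of a map between the poset topologies is equivalent to order-preservation. If $H_1, H_2 \in \cC_G$ satisfy $H_1 \subset H_2$, then $f(H_1) = H_1 \subset H_2 = f(H_2)$ as elements of $\cC_{G'}$, so $f$ preserves $\subset$; hence $f$ is continuous, and being injective it is the desired continuous injection.

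There is essentially no hard step in this argument. The only point deserving a word of care is the verification that a subgraph which is connected relative to $G$ remains connected when regarded inside the larger graph $G'$, so that the codomain $\cC_{G'}$ genuinely contains the image of $f$; this is clear because connectedness is determined by the vertex and edge sets of $H$ alone. Everything else is formal once Lemma \ref{lm:poset_cont} identifies continuity with order-preservation.
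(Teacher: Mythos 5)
Your proposal is correct and is essentially the paper's own argument: the paper likewise takes $f$ to be the natural inclusion of connected subgraphs of $G$ into $\cC_{G'}$, and you have merely spelled out the order-preservation check that Lemma \ref{lm:poset_cont} requires. No issues.
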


\begin{proof}
We choose $f(G)$ such that $f(G)$ is the subgraph $G$ in $G'$.
\end{proof}

\begin{lemma}\label{lm:subGcontII}
Let $G$ and $G'$ be graphs such that there is a continuous injection $f$ from $\cC_{G}$ to $\cC_{G'}$.
Then in general $G$ is not isomorphic to a subgraph in $G'$, or, 
there is not a homomorphism from $G$ to $G'$.
\end{lemma}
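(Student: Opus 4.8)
The plan is to prove the lemma by exhibiting an explicit counterexample: a pair of graphs $G, G'$ together with a continuous injection $f : \cC_G \to \cC_{G'}$ for which \emph{neither} structural conclusion holds. First I would take $G = K_3$, the triangle, and $G' = K_{1,N}$, a star with one central vertex and $N$ leaves, for a suitably large $N$. The reason for this choice is that $K_{1,N}$ is a tree, hence triangle-free, which will immediately deliver the two desired negative facts, while its collection of connected subgraphs is rich enough (essentially a Boolean lattice of sub-stars) to receive an order-preserving injection from $\cC_{K_3}$.

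Next I would reduce the construction of $f$ to a purely order-theoretic task. By Lemma \ref{lm:poset_cont}, a map $\cC_{K_3}\to\cC_{K_{1,N}}$ is continuous if and only if it is order-preserving for $\subset$, so it suffices to build an injective, order-preserving map between the two posets of connected subgraphs. The poset $\cC_{K_3}$ has exactly ten elements: three single vertices, three single edges, three two-edge subgraphs, and the triangle itself. In $K_{1,N}$ I index the sub-stars by the nonempty subsets $S$ of leaves that they span. I would then send these ten subgraphs, respectively, to the sub-stars indexed by the nested family
\[
\{1\},\ \{2\},\ \{3\},\ \{1,2\},\ \{1,3\},\ \{2,3\},\ \{1,2,3\},\ \{1,2,3,4\},\ \{1,2,3,5\},\ \{1,2,3,4,5\},
\]
matching each edge $e_{ij}$ (on source vertices $i,j$) to $\{i,j\}$, the three two-edge subgraphs to $\{1,2,3\}$, $\{1,2,3,4\}$, $\{1,2,3,5\}$, and the triangle to $\{1,2,3,4,5\}$, so that $N=5$ already suffices. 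Verifying that all ten images are distinct and that every covering relation of $\cC_{K_3}$ maps to an inclusion is then routine; note that a continuous injection, unlike an embedding, is free to create extra comparabilities, so only one direction of order-preservation must be checked.

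Finally I would record the two negative conclusions, both following from $K_{1,N}$ being a tree. Since a tree is acyclic, it contains no subgraph isomorphic to $K_3$, so $G$ is not isomorphic to a subgraph of $G'$. And any homomorphism $K_3 \to K_{1,N}$ would send the three pairwise-adjacent vertices of $K_3$ to three pairwise-adjacent vertices of $K_{1,N}$, producing a triangle in a triangle-free graph, which is impossible; hence no homomorphism exists. Together with the continuous injection $f$ built above, this establishes the lemma.

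The step I expect to be the main obstacle is the injectivity of the order-preserving map, concentrated at the three two-edge subgraphs of $K_3$. Each of them must lie above a pair of edge-images and below the top, and in the lattice of leaf-subsets the joins of those pairs all collapse to $\{1,2,3\}$; thus one cannot use minimal joins without breaking injectivity, and the three images must be separated by passing to distinct larger leaf-subsets. This is exactly why the target must be taken large enough, and also why thin targets such as a path $P_4$ fail: a comparison of the number of comparable pairs in $\cC_{K_3}$ with that in $\cC_{P_4}$ already rules out any order-preserving injection in that case.
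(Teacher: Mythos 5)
Your proof is correct, and it establishes the lemma by a genuinely different counterexample than the paper's. The paper's proof takes an arbitrary $G$ and forms $G'$ by subdividing one edge (inserting a vertex $v$ into $e=(v_i,v_j)$), then asserts that $\cC_G$ injects continuously into $\cC_{G'}$ while no homomorphism $G\to G'$ exists; this construction is closely tied to the paper's notion of truncated graph / topological minor, but as stated its negative claims only hold for suitable $G$ (for instance $G=K_3$ subdivided into a $4$-cycle), not for every $G$ — subdividing an edge of a path produces a longer path, which does admit a homomorphic image and a subgraph copy of the original. Your choice of $G=K_3$ and $G'=K_{1,5}$ avoids this looseness entirely: the target being a tree kills both conclusions at once, and your explicit order-preserving injection on the ten-element poset $\cC_{K_3}$, justified via Lemma \ref{lm:poset_cont}, is fully verifiable, including the correct observation that a continuous injection (unlike an embedding in the sense of Lemma \ref{lm:localhome}) may introduce new comparabilities, which is exactly what lets the three two-edge subgraphs land on the distinct sub-stars $\{1,2,3\}$, $\{1,2,3,4\}$, $\{1,2,3,5\}$. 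What the paper's subdivision example buys is a structural message — continuity is insensitive to subdividing edges — which connects to its later discussion of hierarchical refinements; what yours buys is a rigorous, self-contained witness that the continuous-injection relation is strictly weaker than both the subgraph relation and the homomorphism relation simultaneously.
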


\begin{proof}
Figure \ref{fg:SubGrap2} is an example of the statement but we show a simpler one.

Let $G''$ be isomorphic to $G$. 
Then we add a vertex in an edge $e$ of the subgraph $H$ in $G$, i.e., for $e=(v_i, v_j)$, we consider $e'=(v_i, v)$ and $e''=(v, v_j)$ by inserting $v$ into $e$, and refer the generated graph of $G''$ as $G'$, which is not isomorphic to $G$. Further, there is no homomorphism from $G$ to $G'$.
However, there is a continuous injection $i : \cC_G \to \cC_{G'}$.
\end{proof}

\begin{proposition}\label{pr:hom_cont}
For graphs $G$ and $G'$ such that there is a w-homomorphism $f : G \to G'$, 
there exists a continuous map $\hf$ from $\cC_G$ to $\cC_{G'}$.
Hence for graphs $G$ and $G'$ such that there is a homomorphism $f : G \to G'$, there exists a continuous map $\hf$ from $\cC_G$ to $\cC_{G'}$.

We say that $\hf$ is (w-)homomorphism type of the (w-)homomorphism $f$.
\end{proposition}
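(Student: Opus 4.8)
The plan is to exploit Lemma~\ref{lm:poset_cont}, which reduces continuity of a map between poset topological spaces to the purely combinatorial condition of being order-preserving. Since both $\cC_G$ and $\cC_{G'}$ carry the poset topology induced from inclusion $\subset$, it suffices to produce a map $\hf : \cC_G \to \cC_{G'}$ and verify that $H_1 \subset H_2$ implies $\hf(H_1) \subset \hf(H_2)$. No direct open-set chasing will then be needed.

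First I would define $\hf$ by sending a connected subgraph $H=(V_H, E_H)$ of $G$ to its image subgraph: take the vertex set to be $f(V_H) \subset V'$ and the edge set to be $\{(f(u), f(w)) \mid (u,w)\in E_H,\ f(u)\neq f(w)\}$. The w-homomorphism hypothesis is exactly what makes this a subgraph of $G'$: each edge of $H$ either collapses (when $f(u)=f(w)$), contributing only a vertex, or is carried to a genuine edge $(f(u),f(w))\in E'$. Thus the edge set lies in $E'$ and $\hf(H)$ is a subgraph of $G'$.

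The one point requiring a real argument is that $\hf(H)$ is connected, so that $\hf(H)$ genuinely lies in $\cC_{G'}$. Here I would use that $H$ is connected: given two vertices of $\hf(H)$, lift them to vertices of $H$, join them by a path in $H$, and apply $f$; consecutive vertices along the path map either to the same vertex or to adjacent vertices of $G'$, so after deleting repetitions their images form a walk in $\hf(H)$. Hence any two vertices of $\hf(H)$ are joined, and since $V_H\neq\emptyset$ the image is a nonempty connected subgraph, i.e. an element of $\cC_{G'}$. This connectivity step is the main (and essentially only) obstacle; the remainder is bookkeeping.

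Finally, monotonicity is immediate: if $H_1\subset H_2$ then $V_{H_1}\subseteq V_{H_2}$ and $E_{H_1}\subseteq E_{H_2}$, so $f(V_{H_1})\subseteq f(V_{H_2})$ and the edge set of $\hf(H_1)$ is contained in that of $\hf(H_2)$, giving $\hf(H_1)\subset\hf(H_2)$. By Lemma~\ref{lm:poset_cont} the map $\hf$ is continuous, which proves the first assertion. The second assertion follows at once, since by Proposition~\ref{pr:Hom_vs_wHom} every homomorphism is a w-homomorphism, so the identical construction applies.
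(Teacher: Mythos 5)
Your proposal is correct and follows essentially the same route as the paper: define $\hf(H)$ as the image subgraph under $f$ (the paper phrases the collapsed edges via the loop-augmented graph $G^{\prime\circ}$ and the projection $\varpi$, you simply discard them), check that $H_1\subset H_2$ implies $\hf(H_1)\subset\hf(H_2)$, and invoke Lemma~\ref{lm:poset_cont}. Your explicit verification that $\hf(H)$ is connected is a welcome detail that the paper dismisses with ``Clearly, $\hf(H)\in\cC_{G'}$.''
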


\begin{proof}
By considering Proposition \ref{pr:Hom_vs_wHom}, we will prove the w-homomorphism case.
Let $G=(V, E)$ and $G'=(V', E')$.
We show that there is a map $\hf:\cC_{G} \to \cC_{G'}$ such that 
for any pair of connected subgraphs $(H_1, H_2)$ of $G$ with $H_1 \subset H_2$ we have $\hf(H_1)\subset \hf(H_2)$.
Then, it is continuous due to Lemma \ref{lm:poset_cont}.

As mentioned in Remark \ref{rmk:Hom_wHome}, we consider the graph $G^{\prime\circ}=(V', E^{\prime\circ})$ with loops of $(v',v')\in  E^{\prime\circ}$ for every $v'\in V'$, i.e., $E^{\prime\circ}:= E'\cup\{(v', v')\ | v' \in V'\}$, 
and the projection  $\varpi$ $G^{\prime\circ}\to G'$.
Such a map can be constructed as follows.
Fix $H\subset G$ and consider the set of vertices composing to $H$, denoted by $V(H)$.
From the definition of w-homomorphism, for every $v \in V(H)$, 
we find $f(v) \in V'$ such that $(f(v), f(v'))\in E^{\prime\circ}$ whenever  $(v, v') \in E(H)$, where $E(H)$ is the set of edges of $H$.
We can put a map $g : E(H) \to E^{\prime\circ}$ by $g(e=(v, v'))=(f(v), f(v'))$ and $\hf:=(f(V(H)),\varpi\circ g(E(H)))$.
Clearly, $\hf(H)\in \cC_{G'}$.
Let a pair of connected subgraphs $(H_1, H_2)$ of $G$ satisfying $H_1 \subset H_2$.
It is obvious that $f(V(H_1))\subset f(V(H_2)))$ and $g(E(H_1)) \subset g(E(H_2))$, and thus $\hf(H_1) \subset \hf(H_2)$.
\end{proof}

\begin{table}[htb]
\begin{center}
\caption{Examples of Homomorphism, Hom, w-Homomorphism, w-Hom, and Continuous map, CMap}\label{1tb:Table1}
  \begin{tabular}{|c|c|c|c|}
%                  1234567890123
\hline
$G$ and $G'$ & Hom$(G,G')$ & w-Hom$(G,G')$ &CMap$(C_G, C_{G'})$ \\
\hline
\hline
 \begin{minipage}{0.35\textwidth}
\begin{center}
      \includegraphics[height=0.08\textheight]{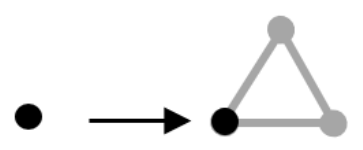}
\end{center}
    \end{minipage}
& exists & exists & exists\\
\hline
 \begin{minipage}{0.35\textwidth}
\begin{center}
      \includegraphics[height=0.08\textheight]{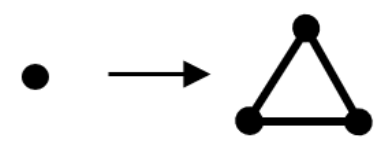}
\end{center}
    \end{minipage}
& not exists & not exists & exists\\
\hline
 \begin{minipage}{0.35\textwidth}
\begin{center}
      \includegraphics[height=0.08\textheight]{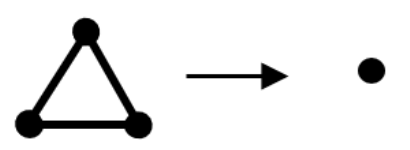}
\end{center}
    \end{minipage}
& not exists & exists & exists\\
\hline
 \begin{minipage}{0.35\textwidth}
\begin{center}
      \includegraphics[height=0.08\textheight]{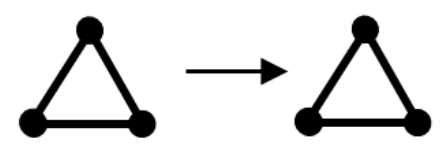}
\end{center}
    \end{minipage}
& exists & exists & exists\\
\hline
 \begin{minipage}{0.35\textwidth}
\begin{center}
      \includegraphics[height=0.08\textheight]{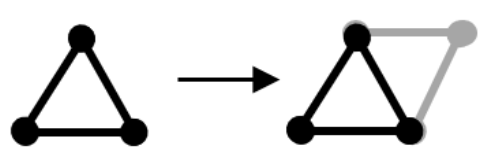}
\end{center}
    \end{minipage}
& exists & exists & exists\\
\hline
 \begin{minipage}{0.35\textwidth}
\begin{center}
      \includegraphics[height=0.08\textheight]{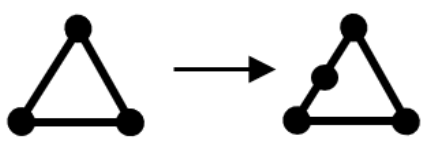}
\end{center}
    \end{minipage}
& not exists & not exists & exists\\
\hline
\hline
  \end{tabular}
\end{center}
\end{table}

\begin{corollary}\label{cj:hom_cont}
For graphs $G=(V,E)$ and $G'=(V', E')$ such that there is a surjective (injective) w-homomorphism $f : G \to G'$, there exists a surjective (injective) continuous $\hf$ map from $\cC_G$ to $\cC_{G'}$.
\end{corollary}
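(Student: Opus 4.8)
The plan is to reuse the lifted map $\hf:\cC_G\to\cC_{G'}$ constructed in Proposition \ref{pr:hom_cont} and simply check that it inherits the extra property of $f$. Recall that there $\hf$ sends a connected subgraph $H=(V_H,E_H)$ of $G$ to the image subgraph $\hf(H)=(f(V_H),\varpi\circ g(E_H))$, whose vertex set is $f(V_H)$ and whose edge set is the collection of non-collapsed images $\{(f(v),f(v'))\ |\ (v,v')\in E_H,\ f(v)\neq f(v')\}$; continuity (i.e.\ order preservation under $\subset$) was already verified via Lemma \ref{lm:poset_cont}. Thus the only thing left is to decide, in each of the two cases, whether $\hf$ is surjective, respectively injective, as a map of the poset collections.

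First I would dispose of the injective case, which I expect to be routine. If $f$ is an injective w-homomorphism, then by Proposition \ref{pr:Hom_vs_wHom} it is in fact an injective homomorphism; in particular $f$ is injective on $V$ and no edge is collapsed to a loop, so that $\hf(H)=(f(V_H),f(E_H))$ with $f$ acting injectively on both vertices and edges. Consequently, if $H_1\neq H_2$ then $V_{H_1}\neq V_{H_2}$ or $E_{H_1}\neq E_{H_2}$, and injectivity of $f$ on vertices and on edges forces $\hf(H_1)\neq \hf(H_2)$. Hence $\hf$ is an injective continuous map, as claimed.

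The surjective case is where I expect the real work, and it is the step I anticipate as the main obstacle. Here $f$ is surjective on vertices and, by Definition \ref{df:w-Hom}, every edge of $G'$ is the image of some edge of $G$. Given a target connected subgraph $H'=(V_{H'},E_{H'})$ of $G'$, I must exhibit a connected subgraph $H$ of $G$ with $\hf(H)=H'$ exactly. Single vertices and single edges lift trivially (pick any preimage vertex, respectively any preimage edge), so my plan for a larger $H'$ is to fix a spanning tree of $H'$, lift it edge by edge by choosing for each tree edge a preimage edge in $G$, and then to stitch the lifted pieces together into a connected subgraph. The delicate point, and the hard part, is that the fibers $f^{-1}(\{w\})$ need not be connected in $G$: two lifted edges meeting at a vertex $w$ of $H'$ may attach to distinct vertices of the fiber, and a path in $G$ joining them could pass through vertices whose $f$-images lie outside $V_{H'}$, thereby enlarging $\hf(H)$ beyond $H'$. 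To close the argument I would need to route all connecting paths inside the preimage $f^{-1}(V_{H'})$, or otherwise exploit connectivity of the relevant fibers; I would therefore concentrate my effort precisely on this controlled-lifting step, treating the vertex and single-edge cases as warm-ups and reducing the general case to joining the lifted spanning-tree edges within $f^{-1}(V_{H'})$ without introducing vertices of foreign $f$-image.
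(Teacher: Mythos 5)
Your treatment of the injective case is correct and matches the paper: by Proposition \ref{pr:Hom_vs_wHom} an injective w-homomorphism is an injective homomorphism, no edge collapses, and injectivity of $f$ on vertices (hence on edges of a simple graph) forces $\hf(H_1)\neq\hf(H_2)$ whenever $H_1\neq H_2$.

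For the surjective case you have correctly located the weak point, but the controlled-lifting step you defer is not merely hard --- it fails, so this is a genuine gap that cannot be closed for the map $\hf$ of Proposition \ref{pr:hom_cont}. Take $G'$ to be the triangle on $\{w_1,w_2,w_3\}$ and $G$ the path with vertices $a_1,\ldots,a_5$ and edges $(a_i,a_{i+1})$, with $f(a_1)=f(a_3)=w_1$, $f(a_2)=f(a_5)=w_2$, $f(a_4)=w_3$; this is a surjective homomorphism in the sense of Definition \ref{df:w-Hom}, since every vertex and every edge of $G'$ is an image. Let $H'=(\{w_1,w_2,w_3\},\{(w_1,w_2),(w_2,w_3)\})\in\cC_{G'}$. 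The only edge of $G$ mapping to $(w_2,w_3)$ is $(a_4,a_5)$, the preimages of $(w_1,w_2)$ are $(a_1,a_2)$ and $(a_2,a_3)$, and since connected subgraphs of a path are subpaths, any connected $H$ containing $(a_4,a_5)$ and one of these must also contain $(a_3,a_4)$, whose image $(w_1,w_3)$ is an edge of $G'$ not in $E_{H'}$; hence $\hf(H)\neq H'$ for every $H\in\cC_G$, and $\hf$ is not surjective. Your fallback of routing connecting paths inside $f^{-1}(V_{H'})$ does not help, since here $f^{-1}(V_{H'})=V$ and the offending edge maps to a ``chord'' of $H'$, i.e.\ an edge of $G'$ joining two vertices of $V_{H'}$ that is not in $E_{H'}$. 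To be fair, the paper's own proof simply infers surjectivity of $\hf$ from surjectivity of $f$ on vertices and edges, which is precisely the inference you (rightly) declined to make; as stated, the corollary's surjective half would need an additional hypothesis (for instance, connectedness of preimages of connected subgraphs, or the absence of such chords) or a weaker reading of surjectivity to be salvaged.
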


\begin{proof}
We use the notations in the proof in Proposition \ref{pr:hom_cont}.
Let $f$ be surjective, i.e., $f: V \to V'$ is surjective and there is a surjective map $E \to E^{\prime\circ}$ denoted by $g$ such that $g(v,v')=(f(v), f(v'))$.
$C_{G'}$ consists of all subgraphs consisting of subsets of $V'$ and $E'$, whereas $C_{G}$ does of all subgraphs consisting of subsets of $V$ and $E$.
Hence we can find a surjective continuous map $\hf: \cC_G\to \cC_{G'}$.
Similarly, we find an injective continuous map $i: \cC_G\to \cC_{G'}$ for an injective homomorphism $i : G \to G'$.
\end{proof}

\begin{remark}\label{rm:hom_cont*}
{{
Though we are dealing only with the collection $\cC_G$ of connected subgraphs of a graph $G$, if we employ the collection $\widehat{\cC}_G$ of subgraphs of a graph $G$ and the topology of $\widehat{\cC}_G$, then we can have more flexible treatment of continuous map;
for graphs $G=(V,E)$ and $G'=(V', E')$ such that there is a surjective w-homomorphism $f : G \to G'$, there might exist a continuous $\widehat{f}^*$ map from $\widehat{\cC}_{G'}$ to $\widehat{\cC}_{G}$.
}}
\end{remark}

Then the following is obvious.

\begin{lemma}\label{lm:subGcontIII}
If and only if $\cC_G$ and  $\cC_{G'}$ are homeomorphic, $G$ and $G'$ are isomorphic.
\end{lemma}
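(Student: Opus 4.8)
The plan is to prove the two implications separately; the reverse direction (homeomorphic $\Rightarrow$ isomorphic) carries essentially all the content. First I would pin down what ``homeomorphic'' means at the level of posets. By the definition of a homeomorphic map together with Lemma \ref{lm:poset_cont} and Lemma \ref{lm:localhome}, a homeomorphic map $F : \cC_G \to \cC_{G'}$ is exactly a bijection satisfying $H_1 \subset H_2 \iff F(H_1) \subset F(H_2)$; that is, $\cC_G$ and $\cC_{G'}$ are homeomorphic precisely when $(\cC_G, \subset)$ and $(\cC_{G'}, \subset)$ are isomorphic as posets. Thus the whole argument reduces to order theory.

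The forward implication is routine. Given a graph isomorphism $\phi : G \to G'$, I would define $F : \cC_G \to \cC_{G'}$ by $H \mapsto \phi(H)$. Since $\phi$ is a bijection on vertices and edges preserving incidence, each image $\phi(H)$ is again a connected subgraph, $F$ is a bijection, and $H_1 \subset H_2 \iff \phi(H_1) \subset \phi(H_2)$. Hence $F$ is a poset isomorphism, i.e.\ a homeomorphic map.

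For the converse, the key is to read off $V$ and $E$ from the poset $(\cC_G, \subset)$ in a purely order-theoretic way, refining the earlier recovery Lemma. I would use the following intrinsic descriptions: the vertices correspond to the \emph{minimal elements} of $\cC_G$ (the single-vertex subgraphs $(\{v\},\emptyset)$); the edges correspond to those elements lying above \emph{exactly two} minimal elements (since in a simple graph the only connected subgraph on two vertices is a single edge, while any connected subgraph on $\ge 3$ vertices lies above $\ge 3$ minimal elements); and a vertex $\{v\}$ is incident to an edge $e$ iff $\{v\} \subset e$. All three conditions are expressed through $\subset$ alone, hence are preserved by any poset isomorphism.

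Putting these together: a homeomorphic $F : \cC_G \to \cC_{G'}$ restricts to a bijection of minimal elements, yielding a vertex bijection $\phi : V \to V'$, and to a bijection of the edge-elements; since $\{v\} \subset e \iff F(\{v\}) \subset F(e)$, the map $\phi$ carries each edge $(v_1, v_2) \in E$ to an edge $(\phi(v_1), \phi(v_2)) \in E'$ and conversely, so $\phi$ is a graph isomorphism. The main obstacle is precisely this intrinsic characterization of edges: one must verify that ``lies above exactly two minimal elements'' selects exactly the single-edge subgraphs, which is where the simpleness hypotheses (no loops, no multiple edges) are essential. Once vertices and edges are identified intrinsically, the remaining verification that $\phi$ respects incidence is bookkeeping.
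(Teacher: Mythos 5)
Your proof is correct. The paper itself offers no argument for this lemma (it is introduced with ``the following is obvious''), so there is nothing to diverge from; your write-up is the natural expansion of what the paper leaves implicit, namely the earlier recovery lemma identifying $V$ with the single-vertex subgraphs and $E$ with the single-edge subgraphs of $\cC_G$. Your one genuine addition is to make that recovery \emph{order-intrinsic} (vertices as minimal elements, edges as elements above exactly two minimal elements, incidence as $\subset$), which is exactly what is needed to transport the structure along a poset isomorphism, together with the observation via Lemma \ref{lm:poset_cont} that a homeomorphic map is precisely an order isomorphism of $(\cC_G,\subset)$ and $(\cC_{G'},\subset)$.
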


\begin{remark}
{\rm{
We note that the continuous map  has flexibility rather than the concept of subgraphs and homomorphism as in Table \ref{1tb:Table1}.

}}
\end{remark}

\begin{lemma}
For any finite graphs $G$ and $H$, there exists a graph $G'$ such that
$f: \cC_G \to \cC_{G'}$ and $h: \cC_H \to \cC_{G'}$ are continuous injections.
\end{lemma}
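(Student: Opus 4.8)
The plan is to build a single connected graph $G'$ containing both $G$ and $H$ as subgraphs, and then to read off the required maps from Lemma~\ref{lm:subGcont}, which already supplies a continuous injection $\cC_{G} \to \cC_{G'}$ whenever $G$ is a subgraph of $G'$.

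First I would relabel vertices, if necessary, so that the vertex sets $V_G$ and $V_H$ are disjoint. Writing $G=(V_G, E_G)$ and $H=(V_H, E_H)$, I would pick an arbitrary vertex $v_G \in V_G$ and $v_H \in V_H$ and define
\[
G' := \bigl(V_G \cup V_H,\ E_G \cup E_H \cup \{(v_G, v_H)\}\bigr).
\]
Since $V_G$ and $V_H$ are disjoint, the bridging edge $(v_G, v_H)$ is new, so $G'$ is a simple graph (no loops and no multiple edges); and because $G$ and $H$ are each connected while the new edge joins them, $G'$ is connected. This connectedness is the one point that genuinely matters, since the framework of this article admits only connected graphs — it is precisely why I add a bridging edge rather than simply forming the disjoint union of $G$ and $H$.

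By construction neither $G$ nor $H$ uses the edge $(v_G, v_H)$, so the inclusions $G \subset G'$ and $H \subset G'$ hold as subgraphs. Lemma~\ref{lm:subGcont} applied to each inclusion then produces continuous injections $f: \cC_G \to \cC_{G'}$ and $h: \cC_H \to \cC_{G'}$, which is exactly the assertion.

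I do not anticipate a real obstacle; the argument is essentially a construction followed by a citation. The only subtlety is the connectivity constraint noted above, and an equally valid alternative would be to identify $v_G$ with $v_H$ into a single cut vertex, after which $G$ and $H$ appear as isomorphic copies of subgraphs of $G'$ and one concludes by combining Lemmas~\ref{lm:subGcont} and~\ref{lm:subGcontIII}.
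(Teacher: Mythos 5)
Your construction is correct and follows essentially the same route as the paper, which simply chooses a $G'$ containing both $G$ and $H$ as subgraphs and invokes Lemma~\ref{lm:subGcont}; your version merely adds the explicit bridging edge to guarantee connectedness, a detail the paper leaves implicit.
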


\begin{proof}
We choose that $G'$ has subgraphs $G$ and $H$, and then we obtain the injections.
\end{proof}

This lemma shows that sufficiently large order complete graphs $G_c$ can be a target of the continuous injections from smaller order graphs $H$,
 i.e., $f:\cC_H \to \cC_{G_c}$ is a continuous injection if $|H|<|G_c|$.

\begin{figure}
\begin{center}
\includegraphics[width=0.48\hsize]{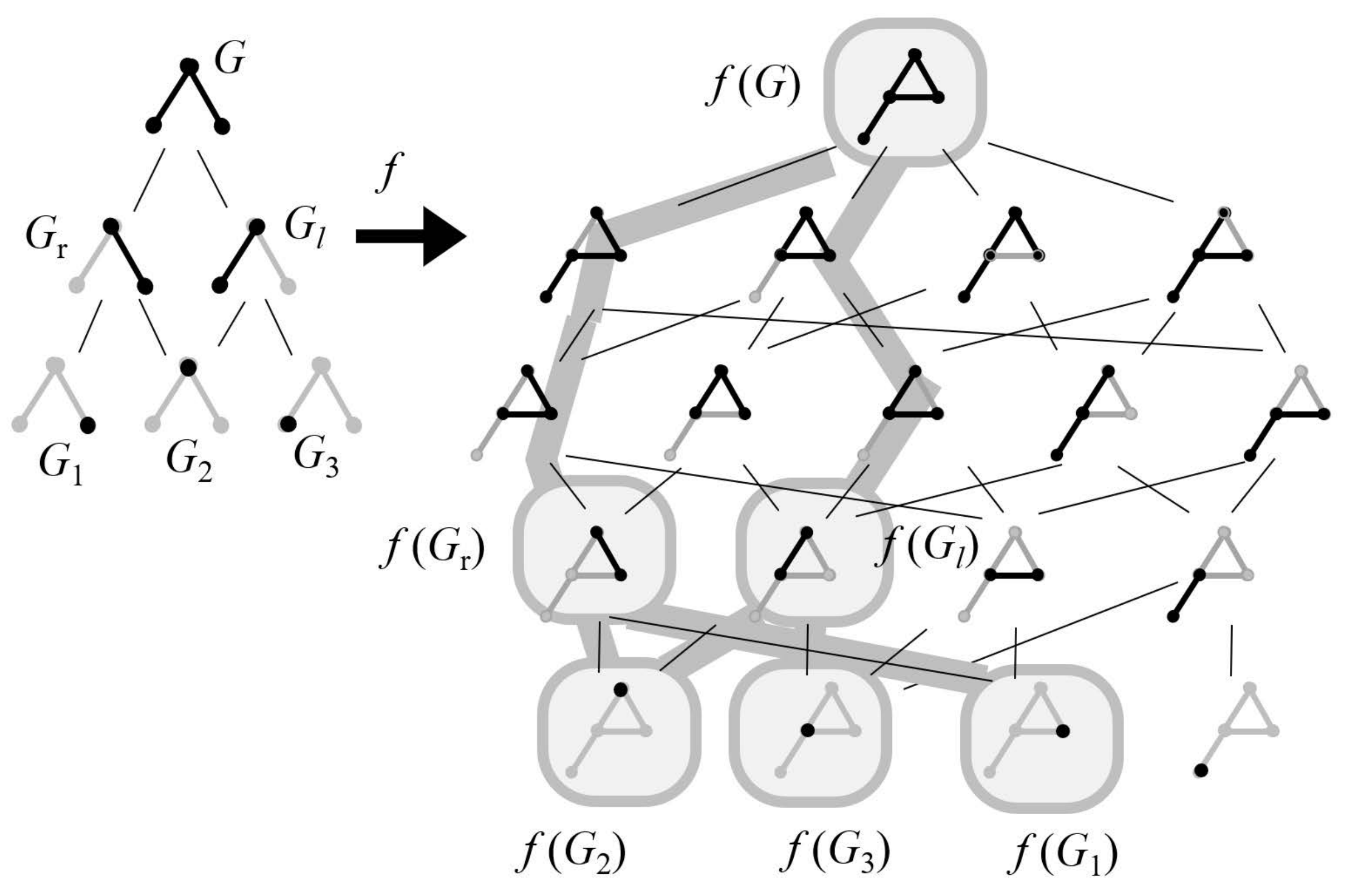}
\hskip 0.03\hsize
\includegraphics[width=0.48\hsize]{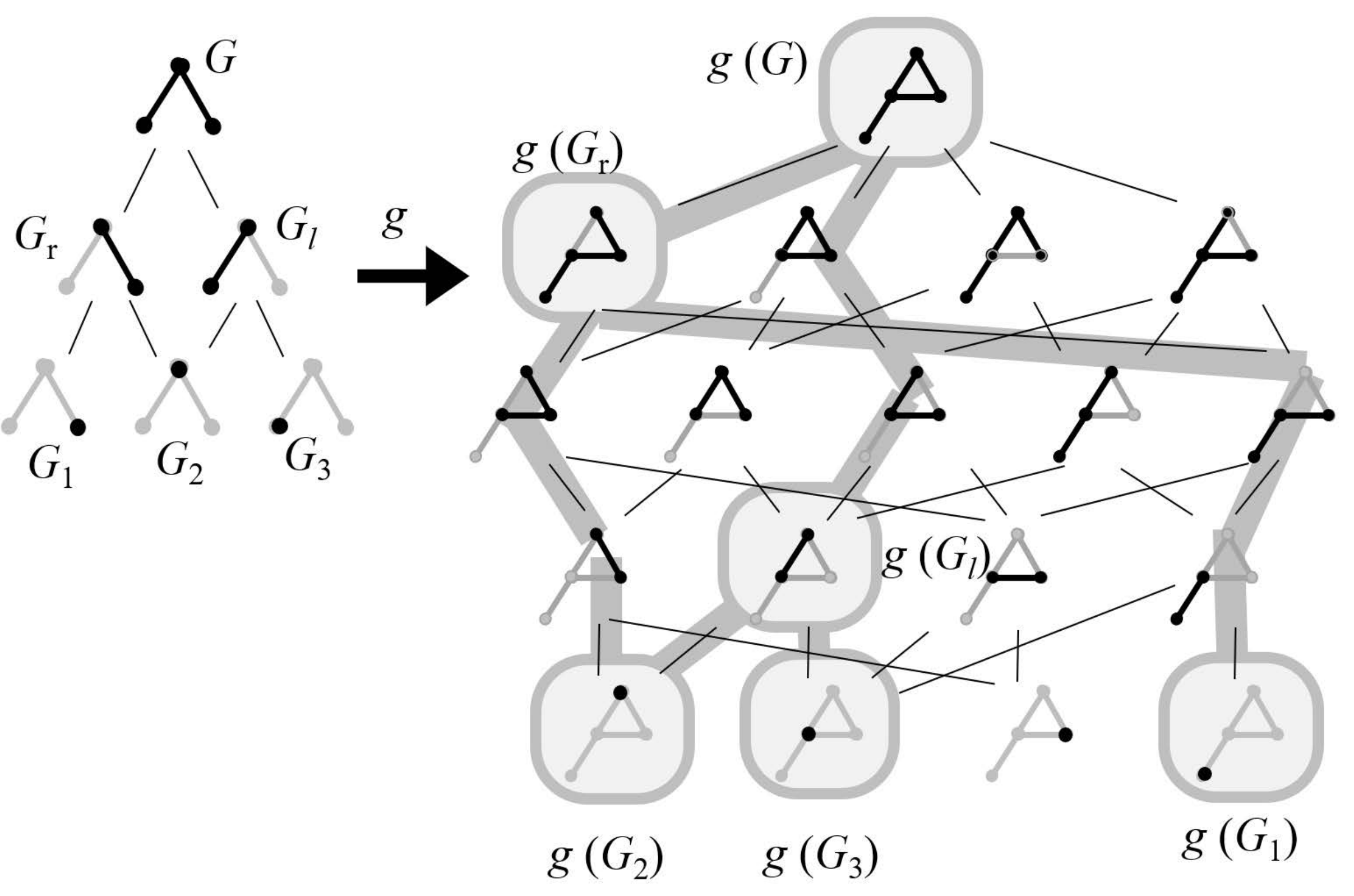}

(a) \hskip 0.48\hsize (b)

\includegraphics[width=0.48\hsize]{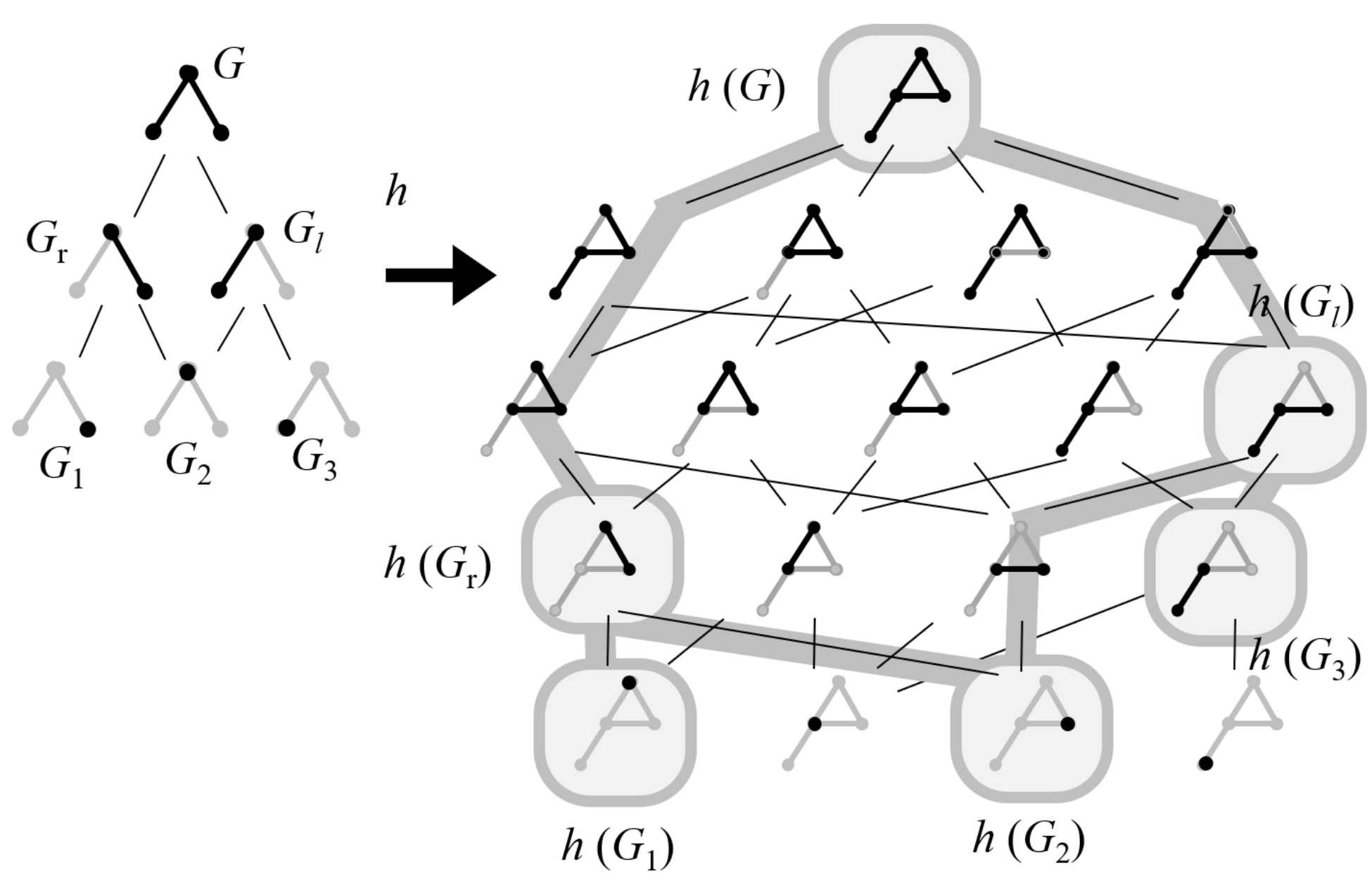}
\hskip 0.03\hsize
\includegraphics[width=0.48\hsize]{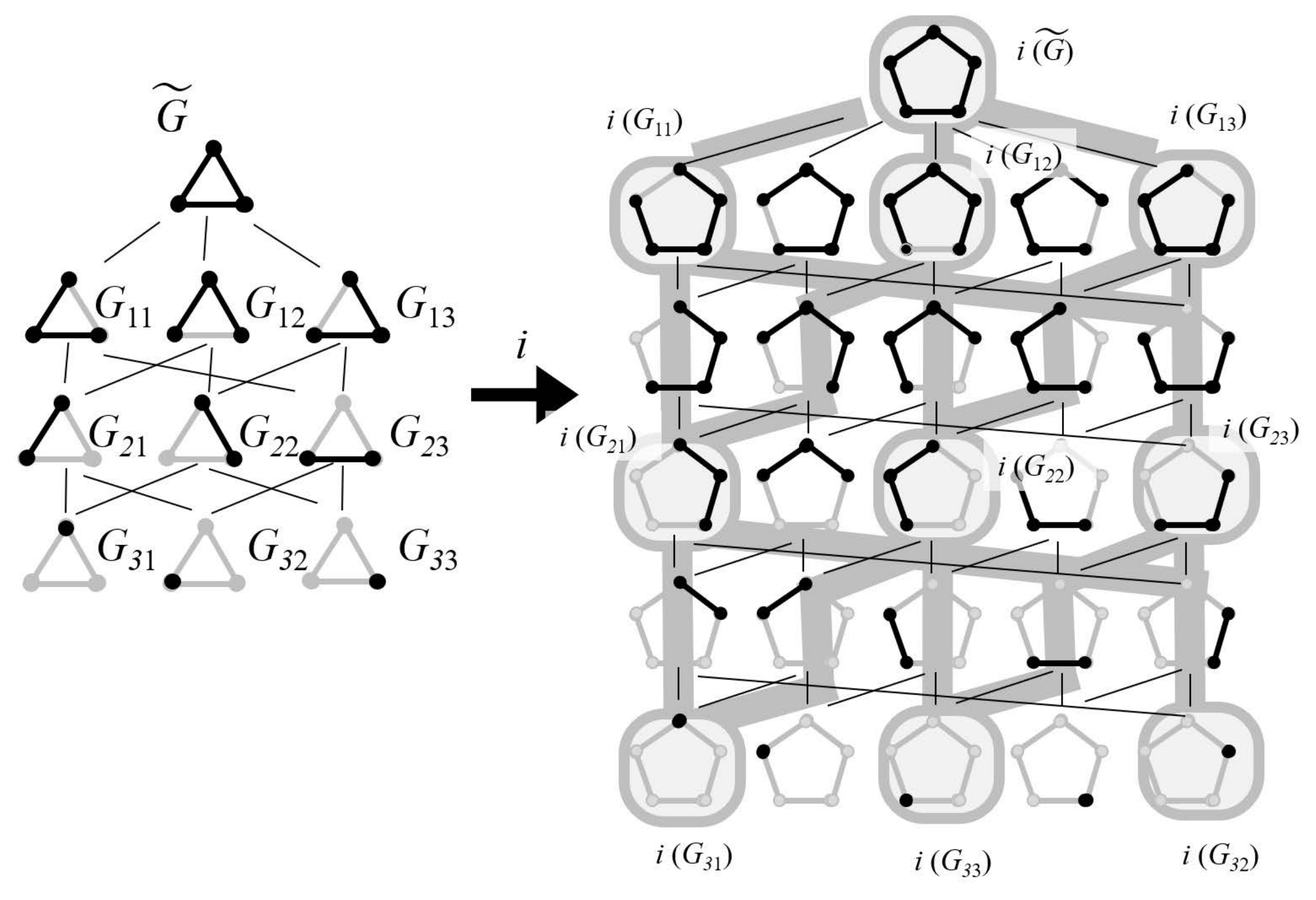}

(c) \hskip 0.48\hsize (d)

\end{center}

\caption{
Examples of continuous injections are illustrated.
}\label{fg:SubGrap2}
\end{figure}

\begin{figure}
\begin{center}
\includegraphics[width=0.48\hsize]{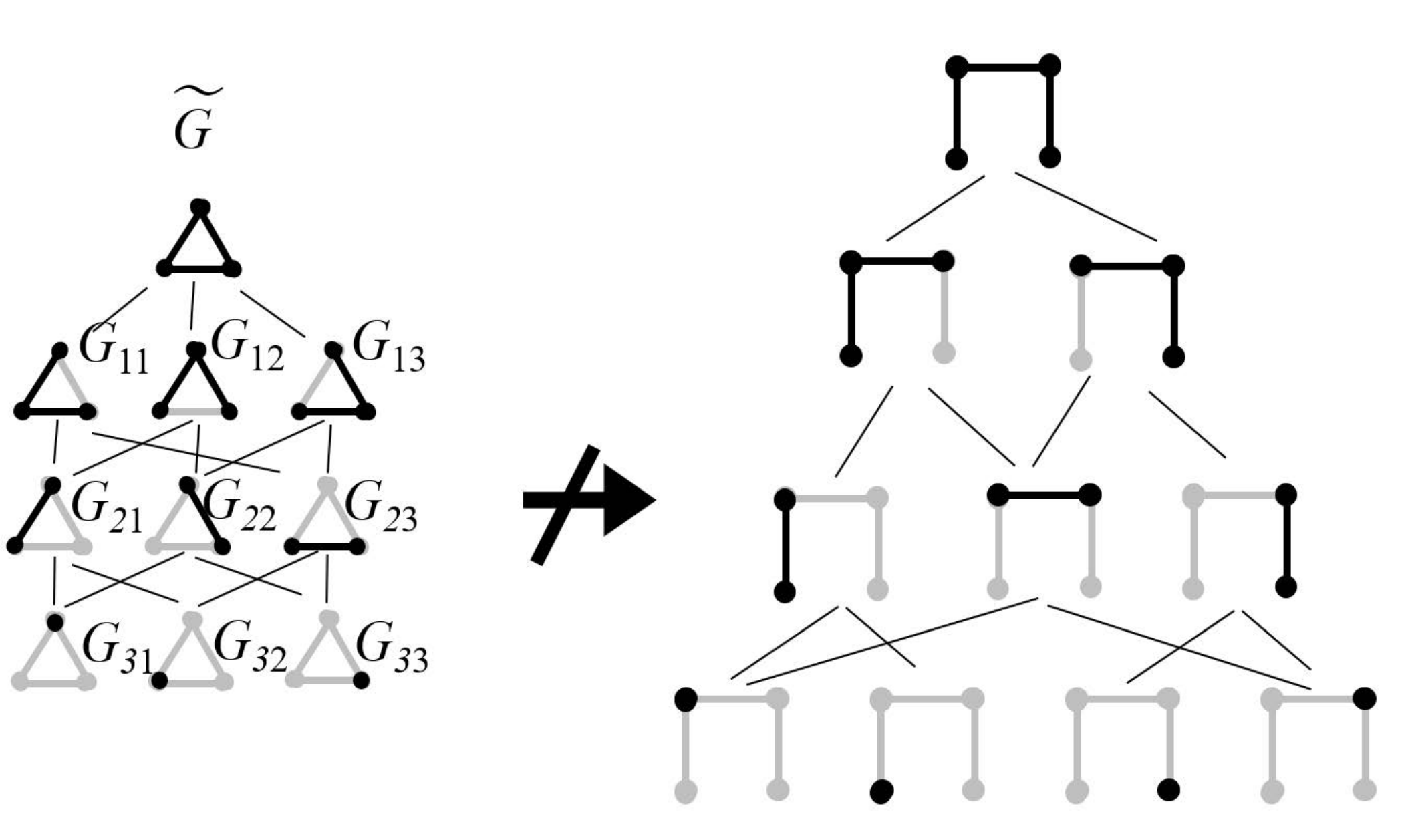}

\end{center}

\caption{
An example which we cannot define an embedding.
}\label{fg:SubGrap3}
\end{figure}

\begin{figure}
\begin{center}
\includegraphics[width=0.80\hsize]{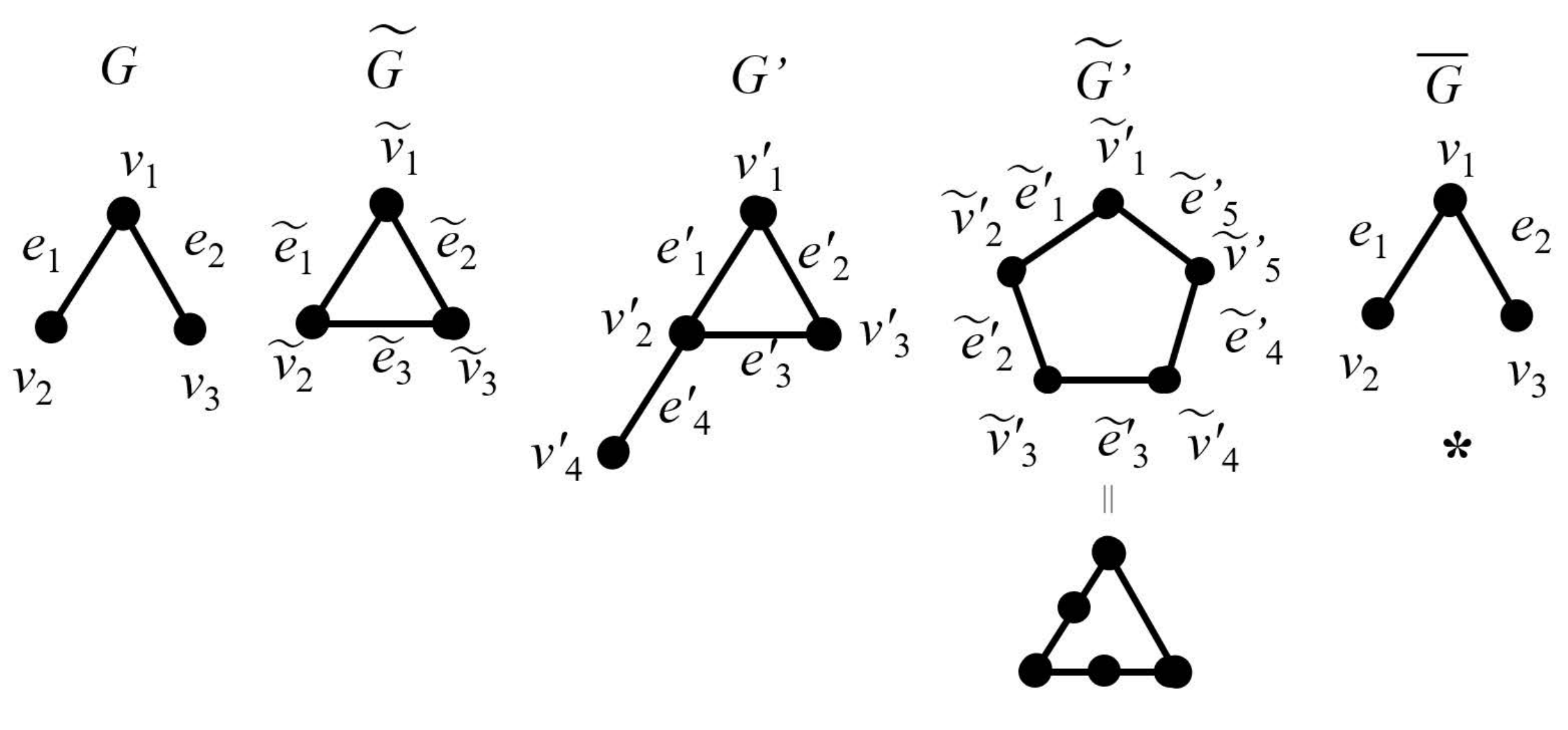}

\end{center}

\caption{
The graphs for the examples of the continuum injections.
}\label{fg:IndGrap1}
\end{figure}

\subsubsection{Examples of the continuous maps}\label{exmp:3.2.1}
We illustrate the examples of the continuous maps in Figure \ref{fg:SubGrap2}.
Futher using Figure \ref{fg:IndGrap1}, we list up the subgraphs and continuous maps as follows.

\begin{enumerate}

\item $G=(V, E)=(\{v_1, v_2, v_3\}, \{e_1, e_2\})$ has the subgraphs,
$G_r = (\{v_1, v_3\}, \{e_2\})$,
$G_\ell = (\{v_1, v_2\}, \{e_1\})$,
$G_1 = (\{v_3\}, \emptyset)$,
$G_2 = (\{v_1\}, \emptyset)$, and 
$G_3 = (\{v_2\}, \emptyset)$.

\begin{enumerate}

\item $f(G_r) = (\{v_1', v_3'\}, \{e_2'\})$, 
$f(G_\ell) = (\{v_1', v_2'\}, \{e_1'\})$, 
$f(G_1) = (\{v_3'\}, \emptyset)$,
$f(G_2) = (\{v_1'\}, \emptyset)$,
and
$f(G_3) = (\{v_2'\}, \emptyset)$.

\item $g(G_r) = (\{v_1', v_2', v_3', v_4'\}, \{e_2', e_3', e_4'\})$, 
$g(G_\ell) = (\{v_1', v_2'\}, \{e_1'\})$, 
$g(G_1) = (\{v_4'\}, \emptyset)$,
$g(G_2) = (\{v_1'\}, \emptyset)$,
and
$g(G_3) = (\{v_2'\}, \emptyset)$.

\item $h(G_r) = (\{v_1', v_3'\}, \{e_2'\})$, 
$h(G_\ell) = (\{v_2', v_3', v_4'\}, \{e_3', e_4'\})$, 
$h(G_1) = (\{v_1'\}, \emptyset)$,
$h(G_2) = (\{v_3'\}, \emptyset)$,
and
$h(G_3) = (\{v_2',v_4'\},\{e_4'\})$.

\end{enumerate}

\item $\tG=(\tV, \tE)=(\tv_1, \tv_2, \tv_3\}, \{\te_1, \te_2, \te_3\})$ has the subgraphs,
$\tG_{21} = (\{\tv_1, \tv_2\}, \{\te_1\})$,
$\tG_{22} = (\{\tv_1, \tv_3\}, \{\te_2\})$,
$\tG_{23} = (\{\tv_2, \tv_3\}, \{\te_3\})$,
$\tG_{31} = (\{\tv_1\}, \emptyset)$,
$\tG_{32} = (\{\tv_2\}, \emptyset)$, and 
$\tG_{33} = (\{\tv_3\}, \emptyset)$.

\begin{enumerate}

\item 
$i(\tG_{21}) = (\{\tv_1', \tv_5', \tv_4'\}, \{\te_4', \te_5'\})$, 
$i(\tG_{22}) = (\{\tv_1', \tv_2', \tv_3'\}, \{\te_1', \te_2'\})$, 
$i(\tG_{23}) = (\{\tv_3', \tv_4', \tv_5'\}, \{\te_3', \te_4'\})$, 
$i(\tG_{31}) = (\{\tv_1'\}, \emptyset)$,
$i(\tG_{32}) = (\{\tv_5'\}, \emptyset)$,
and
$i(\tG_{33}) = (\{\tv_3'\}, \emptyset)$.

\end{enumerate}

\end{enumerate}

For the continuous injection of $f: \cC_G \to \cC_{G'}$, we may find an induced map $f^*: G' \to G$.
In order to define it, we introduce a extension of a graph illustrated in Figure \ref{fg:IndGrap1}.

\begin{definition}
For a connected graph $G$ and a generic element $*$, we define the pointed graph $(G, *)$ which is denoted by $\BG$.
\end{definition}

The following is obvious.

\begin{proposition}
Assume that $G=(V, E)$, where $V=\{v_i\}_{i=1, \ldots, n_V}$ and $E=\{e_i\}_{i=1, \ldots, n_E}$, and  $G'=(V', E')$.
Let $t(e)$ be the vertices as ends of an edge $e$ such that $(t(e), e) \in \cC_G$, and $t_0(e)$ be one of $t(e)$ whose index is minimal one of $t(e)$.
For the continuous injection of $f: \cC_G \to \cC_{G'}$, there is a map $f^*$ from $\BG'$ to $\BG$ satisfying the following

\begin{enumerate}
\item For an element $f(\{v_i\}, \emptyset)=(V_{v_i}', E_{v_i}')$, $(i=1, \ldots, n_V)$, $v_j' \in V_i'$, and $e_\ell' \in E_{i}'$, $f^*(v_j') := v_i$, and $f^*(e_\ell'):=v_i$, let $V'_V:=\cup_{i=1, \ldots, n_V} V_{v_i}'$.

\item 
For an element $f(t(e_i), e_i)=(V_{e_i}', E_{e_i}')$, $(i=1, \ldots, n_E)$, $v_j' \in V_{e_i}'\setminus V'_V$, and $e_\ell' \in E_{e_i}'\setminus(\cup_{j<i}E_{e_j}')$, $f^*(v_j') := t_0(e_i)$ and $f^*(e_\ell'):=e_i$, let $V'_E:=\cup_{i=1, \ldots, n_V} V_{e_i}'$ and $E'_E:=\cup_{i=1, \ldots, n_V} E_{e_i}'$.

\item 
For an element $v' \in V' \setminus (V'_V\cup V'_E)$, let $f^*(v') = *$ and
for an element $e' \in E' \setminus E'_E$, let $f^*(e') = *$.

\end{enumerate}

\end{proposition}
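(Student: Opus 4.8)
The plan is to show that the three assignment rules in the statement genuinely determine a single-valued map $f^*\colon\BG'\to\BG$; once well-definedness is verified there is nothing more to prove, which is the sense in which the statement is obvious.

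First I would record the facts that make the notation meaningful. Since $f\colon\cC_G\to\cC_{G'}$ takes its values in $\cC_{G'}$, each $f(\{v_i\},\emptyset)=(V'_{v_i},E'_{v_i})$ and each $f(t(e_i),e_i)=(V'_{e_i},E'_{e_i})$ is a genuine connected subgraph of $G'$, so $V'_V$, $V'_E$ and $E'_E$ are defined. Moreover $f$ is a continuous injection, hence order preserving by Lemma \ref{lm:poset_cont}: whenever $v$ is an endpoint of $e_i$ we have $(\{v\},\emptyset)\subset(t(e_i),e_i)$ in $\cC_G$, so $f(\{v\},\emptyset)\subset f(t(e_i),e_i)$, i.e.\ $V'_{v}\subset V'_{e_i}$ and $E'_{v}\subset E'_{e_i}$. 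This nesting is the structural reason the three strata are compatible with the incidence of $\BG$: the vertex-images of the endpoints of $e_i$ sit inside $V'_{e_i}$, so the block coming from $e_i$ is collapsed onto $e_i$ together with its endpoints.

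Next I would dispatch coverage. The decompositions $V'=V'_V\sqcup(V'_E\setminus V'_V)\sqcup\big(V'\setminus(V'_V\cup V'_E)\big)$ and $E'=\big(\bigcup_i E'_{v_i}\big)\sqcup\big(E'_E\setminus\bigcup_i E'_{v_i}\big)\sqcup(E'\setminus E'_E)$, together with $*\mapsto *$, show that rules (1)--(3) touch every element of $\BG'$, and the set-subtractions in (2) and (3) make the three cases mutually exclusive. So the only real content is single-valuedness inside each case.

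The hard part, and the main obstacle, is exactly these within-case overlaps. For edges it is already resolved in the statement: the subtraction $E'_{e_i}\setminus\bigcup_{j<i}E'_{e_j}$ assigns a shared edge to the least index $i$. Two further overlaps must be treated by the same principle. An edge may lie both in a vertex-image $E'_{v_i}$ and in an edge-image $E'_{e_k}$ (indeed $E'_{v_i}\subset E'_{e_k}$ by the nesting above), so I would give rule (1) priority over rule (2) by also excluding $\bigcup_i E'_{v_i}$ from the edge part of (2); and a vertex $v'\in(V'_{e_i}\cap V'_{e_{i'}})\setminus V'_V$ with $i\neq i'$ would otherwise receive both $t_0(e_i)$ and $t_0(e_{i'})$, which I would resolve by reading the vertex rule in (2) as $v'\in V'_{e_i}\setminus\big(V'_V\cup\bigcup_{j<i}V'_{e_j}\big)$, again the least-index convention, with a possible overlap $V'_{v_i}\cap V'_{v_{i'}}$ in rule (1) handled the same way. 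With the priority order ``rule (1) before (2) before (3), and lower index first,'' every vertex and edge of $G'$ receives exactly one label, so $f^*$ is well defined and the argument is complete.
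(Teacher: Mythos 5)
Your proposal is correct and follows essentially the same route as the paper: the paper's proof likewise treats the three clauses as a case-by-case definition of $f^*$ on the preimages of $V$ and $E$ under $f$, invokes the set-subtractions to ``avoid multiple definitions,'' and sends every unmatched element of $\BG'$ to $*$. You are in fact more explicit than the paper, which only informally acknowledges the within-case overlaps (a vertex or edge of $G'$ lying in several images $V'_{v_i}$, $V'_{e_i}$, $E'_{v_i}$, $E'_{e_i}$) that you resolve via the nesting from Lemma \ref{lm:poset_cont} and the priority/least-index convention --- a convention the paper's own worked examples (e.g.\ $h^*(e_4')=v_2$) implicitly adopt.
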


\begin{proof}
We note that every graph $G$ is determined by the set of vertices $V$, which corresponds to the collection of its subgraphs given by single-order graphs, and the set of edges $E$, which corresponds to the collection of its subgraphs with a single edge.
Thus we consider the image of $f^*$ by considering two cases.
(1) means that we consider the preimage of $V$ for $f^*$  by considering the correspondence $f$, whereas (2) shows the preimage of $E$.
However, since multiple preimages exist for elements of $V$ and $E$, we avoid multiple definitions.
Further, since there are elements in $G'$ that have no correspondence from $G$ and its subgraph, we define the map from these to $*$ in $G$.
Then we define a map from $\BG'$ to $\BG$.
\end{proof}

\subsubsection{Examples of the induced maps}

For the examples in Subsubsection \ref{exmp:3.2.1} associated with Figure \ref{fg:SubGrap2}, we show the examples of the induced maps as follows.

\begin{enumerate}

\item $f^*(v_1') =v_1$, $f^*(v_2') = v_2$, $f^*(v_3')= v_3$, $f^*(v_4')=*$, 
 $f^*(e_1') =e_1$, $f^*(e_2') = e_2$, $f^*(e_3')= *$, $f^*(e_4')=*$, 
$f^*(*) =*$.

\item 
$g^*(v_1') =v_1$, $g^*(v_2') = v_2$, $g^*(v_3')= v_1$, $g^*(v_4')=v_3$, 
 $g^*(e_1') =e_1$, $g^*(e_2') = e_2$, $g^*(e_3')= e_2$,   $g^*(e_4')=e_2$, 
$g^*(*) =*$.

\item 
$h^*(v_1') =v_3$, $h^*(v_2') = v_2$, $h^*(v_3')= v_1$, $h^*(v_4')=v_2$, 
 $h^*(e_1') =*$, $h^*(e_2') = e_2$, $h^*(e_3')= e_1$, $h^*(e_4')=v_2$, 
$h^*(*) =*$.

\item 
$i^*(\tv_1') =\tv_1$, $i^*(\tv_2') = \tv_1$, 
$i^*(\tv_3')= \tv_3$, $i^*(\tv_4')=\tv_2$, 
$i^*(\tv_5')=\tv_2$,
 $i^*(\te_1') =\te_2$,
 $i^*(\te_2') = \te_2$, 
$i^*(\te_3')= \te_3$, 
$i^*(\te_4')=\te_1$, 
$i^*(\te_5')=\te_1$, 
$i^*(*) =*$.

\end{enumerate}

\begin{definition}
For a continuous injection of $f: \cC_G \to \cC_{G'}$, and its induced  map $f^*$ from $\BG'\to \BG$, we say that the correspondence $f^\#: G' \to \BG$ is injective continuous morphism induced from $f$.
\end{definition}

\section{Conway's law}

Let $G_S$ be the graph of the geometric structure of a software system $S$
and $G_O$  the graph of the geometric structure of the organization $O$ producing $S$.

To simplify the discussion, let us make the following assumptions.
The graph of the software system $S$ consists of $G_S=(V_S, E_S)$, where the set of nodes $V_S$ is a set of modules in the software $S$ and the set of edges $E_S$ represents the relations among $V_S$.
$G_O=(V_O, E_O)$ is given by the set of nodes $V_O$ whose element is a team (or a person) in the organization $O$ and the set of edges $E_O$ which represents the relations among $V_O$.

\subsection{The first step: original Conway's law with task graph}
The following is the original Conway's law \cite{Conway} by adding the task graphs.

\begin{law}\label{law1}
Let $G_S=(V_S, E_S)$ be the finite connected graph of the software system $S$
and $G_O=(V_O, E_O)$ the finite connected graph of the organization $O$ producing $S$.
There exists a homomorphism $q: G_S \to G_O$.

More precisely, 
there exists a graph $G_T=(V_T, E_T)$ of the structure of the task of the software system $G_S$ such that there exists a surjective homomorphism $p: G_S \to G_T$ and injective homomorphism $i: G_T \to G_O$ satisfying $q = i \circ p$, i.e., \begin{equation}
\xymatrix{ G_S \ar[d]^{p}\ar[dr]^{q}&\\
           G_T \ar[r]^{i}& G_O }
\end{equation}
\end{law}

\begin{definition}
For $G_S$, $G_O$, and $G_T$ in Law \ref{law1}, $(G_S, G_O)$ and $(G_S, G_O, G_T)$ are called a Conway doublet and a Conway triplet respectively.
Further, we say that $q$ is a Conway correspondence and $(p,i)$ is the Conway decomposition of $q$.
\end{definition}

\begin{remark}\label{rk:ConwayLaw1}
{\rm{
As mentioned in Introduction, Conway implicitly stated the following:
\begin{enumerate}
\item If $(G_S,G_O)$ is not Conway's doublet, instead of $G_S$, a software system with a truncated graph $G_S'$ of $G_S$ such that there is homomorphism $G_S'$ to $G_O$ is produced.

\item If $q$ itself is injective, $p$ is the isomorphism (though it is not usual.)

\item 
The task graph represents that for a vertex $v$ of $f(G_S)$ corresponding a person or a team, it is responsible to the tasks $p^{-1}(\{i^{-1}(v)\})=q^{-1}(\{v\})$, as a fiber of $p: G_S \to G_T$.
The projection $p$ shows the binding the responsible jobs to the modules or subsystems as a task.

\item 
Conway implied that $G_T$ and $i(G_T)$ should be simple to avoid extra communication.
In order to avoid unnecessary communication, 1) $i$ must be isomorphic to the subgraph consisting of $q(V_S)$ and the set of edges which is a subset of $E_O$ whose connected $q(V_S)$, and 2) $G_T$ must have a simple structure.
As shown in figure \ref{fg:GS_GT}, there are several surjective homomorphisms.
Depending on the morphisms, the complexity of $G_T$ differs. 
Since the complex structure requires extra communication between teams, $G_T$ should be simple.

\end{enumerate}
}}
\end{remark}

\begin{figure}
\begin{center}
\includegraphics[width=0.4\hsize]{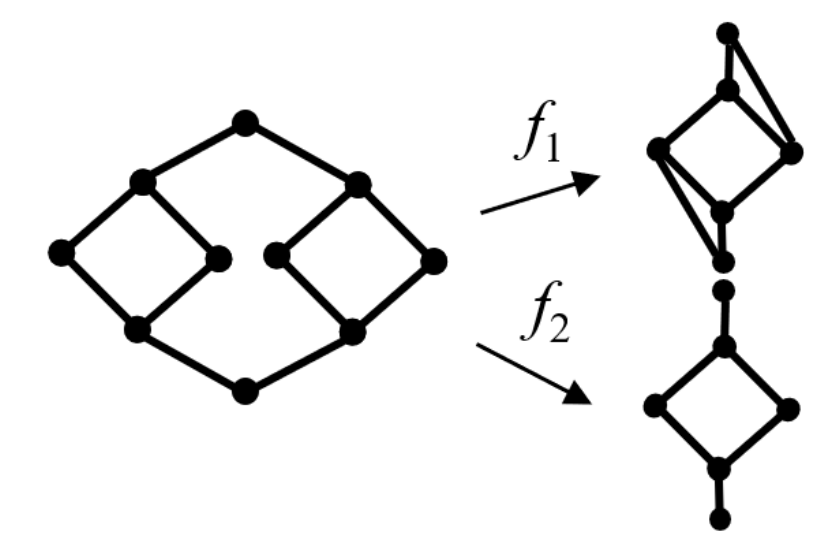}

\end{center}

\caption{
There are two surjective homomorphisms $f_1$ and $f_2$
}\label{fg:GS_GT}
\end{figure}

\subsection{The second step: Conway's law with w-homomorphism}

Second we weaken these by w-homomorphisms noting Proposition \ref{pr:Hom_vs_wHom}.

\begin{law}\label{law2}
For the situation in Law \ref{law1}, there exists a w-homomorphism $q: G_S \to G_O$.

More precisely, 
there exists a graph $G_T=(V_T, E_T)$ of the structure of the task of the software system $G_S$ such that there exists a surjective w-homomorphism $p: G_S \to G_T$ and injective homomorphism $i: G_T \to G_O$ satisfying $q = i \circ p$, i.e., \begin{equation}
\xymatrix{ G_S \ar[d]^{p}\ar[dr]^{q}&\\
           G_T \ar[r]^{i}& G_O }
\end{equation}
\end{law}

\begin{definition}
For $G_S$, $G_O$, and $G_T$ in  Law \ref{law2}, $(G_S, G_O)$ and $(G_S, G_O, G_T)$ are called a w-Conway doublet and a w-Conway triplet respectively.
Further, we say that $q$ is a w-Conway correspondence and $(p,i)$ is the w-Conway decomposition of $q$.
\end{definition}

\subsection{The third step: Conway's law under graph topology}

As Conway wrote {\lq\lq}The structures of large systems tend to disintegrate during development, qualitatively more so than with small systems{\rq\rq} \cite{Conway}, we should design the organization. 

In Conway's modeling, he stated that the geometrical structure has a hierarchical structure as mentioned later (see Examples \ref{ssec:exmp_Conway} \ref{itm:Hir7}).
Thus, when we compare the $G_S$ and $G_O$, it must be necessary that both abstraction levels are the same.
However, it might be difficult.
We should describe the hierarchical structure of the system and the organization in a unified manner.
The homomorphism is not a proper tool to express it, as mentioned in Remark \ref{rmk:Hom_wHome2}.
Although w-homomorphism could express it, graph topology is a more natural tool to describe it completely.

Thus, we rephrase them by using the graph topology.

\begin{law}\label{law3}
For the situation in Law \ref{law1}, there exists a continuous map $\hq: \cC_{G_S} \to \cC_{G_O}$.

More precisely, 
there exists a graph $G_T=(V_T, E_T)$ of the structure of the task of the software system $G_S$ such that there exist a continuous surjection $\hp: \cC_{G_S} \to \cC_{G_T}$ with w-homomorphism $p: G_S \to G_T$ and a continuous injection $\hati: \cC_{G_T} \to \cC_{G_O}$ satisfying $\hq = \hati \circ \hp$, i.e.,  
\begin{equation}
\xymatrix{ \cC_{G_S} \ar[d]^{\hp}\ar[dr]^{\hq}&\\
           \cC_{G_T} \ar[r]^{\hati}& \cC_{G_O} }
\end{equation}
\end{law}

\begin{definition}
For $G_S$, $G_O$, and $G_T$ in  Law \ref{law3}, $(G_S, G_O)$ and $(G_S, G_O, G_T)$ are called a t-Conway doublet and a t-Conway triplet respectively.
Further, we say that $\hq$ is a t-Conway correspondence and $(\hp,\hati)$ is the t-Conway decomposition of $\hq$.

Let $p$ be a surjective w-homomorphism $p$ for $\hp$.
We refer to the correspondence $\hq^\#: G_O\to \BG_S$ given by $\hq^\#(v):=p^{-1}(\{\hati^{*-1}(v)\}) \subset V_S$ and $\hq^\#(e):=p^{-1}(\{\hati^{*-1}(e)\})$ for $(v,e) \in (V_O, E_O)$ as a t-Conway morphism.
Here we let $p^{-1}(\{*\})$ be $* \in \BG_S$.

If there is no continuous map from $\cC_{G_S}$ to $\cC_{G_O}$, then we say that $G_S$ is a more complicated graph than $G_O$.
\end{definition}

\begin{proposition}
If $(G_S, G_O)$ and $(G_S, G_O, G_T)$ are a Conway doublet and a Conway triplet respectively, they are t-Conway (w-Conway) doublet and t-Conway (w-Conway) triplet.
\end{proposition}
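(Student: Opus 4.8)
The plan is to read both conclusions off the structural results already in hand: Proposition \ref{pr:Hom_vs_wHom} (every homomorphism is a w-homomorphism, and composites behave well), Proposition \ref{pr:hom_cont} (a w-homomorphism induces a continuous map on collections of connected subgraphs), and Corollary \ref{cj:hom_cont} (surjectivity/injectivity is preserved by that induction). The only substantive point will be the commutativity of the induced triangle, so I would organize the argument to isolate that at the end.

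First I would dispatch the w-Conway assertions, for which no computation is needed. A Conway doublet supplies a homomorphism $q: G_S \to G_O$; by the second clause of Proposition \ref{pr:Hom_vs_wHom}, $q$ is a w-homomorphism, so it is a w-Conway correspondence and $(G_S, G_O)$ is a w-Conway doublet. For the triplet, a Conway triplet furnishes a surjective homomorphism $p: G_S \to G_T$ and an injective homomorphism $i: G_T \to G_O$ with $q = i \circ p$. Here $p$ is in particular a surjective w-homomorphism and $i$ is exactly the injective homomorphism demanded verbatim in Law \ref{law2}, so $(G_S, G_O, G_T)$ is a w-Conway triplet.

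Next I would establish the t-Conway assertions. For the doublet, Proposition \ref{pr:hom_cont} converts the homomorphism $q$ into a continuous map $\hq: \cC_{G_S} \to \cC_{G_O}$, which is precisely a t-Conway correspondence; hence $(G_S, G_O)$ is a t-Conway doublet. For the triplet, Corollary \ref{cj:hom_cont} turns the surjective homomorphism $p$ into a continuous surjection $\hp: \cC_{G_S} \to \cC_{G_T}$ and the injective homomorphism $i$ into a continuous injection $\hati: \cC_{G_T} \to \cC_{G_O}$; since $p$ is itself the (surjective) w-homomorphism attached to $\hp$, the hypothesis of Law \ref{law3} is met. What remains is to verify $\hq = \hati \circ \hp$.

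The commutativity is the one place that genuinely requires attention, and it is the main (though mild) obstacle, because Proposition \ref{pr:hom_cont} only asserts \emph{existence} of an induced continuous map. I would resolve it by using the \emph{specific} construction given in that proof, which sends a connected subgraph $H = (V_H, E_H)$ to the image subgraph whose vertex set is the set-image of $V_H$ and whose edge set is the (possibly $\varpi$-contracted) image of $E_H$ under the underlying vertex map. Since $p$ is a genuine homomorphism, no edge is contracted, and $\hp(H)$ is literally the image subgraph $p(H)$; applying the same construction for $i$ then gives, for every $H \in \cC_{G_S}$, that $\hati(\hp(H))$ has vertex set $i(p(V_H)) = (i\circ p)(V_H) = q(V_H)$ and the matching edge data, so $\hati(\hp(H)) = \hq(H)$. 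Thus the assignment $f \mapsto \hf$ is functorial on the relevant maps, the triangle commutes, and $(G_S, G_O, G_T)$ is a t-Conway triplet, completing the argument.
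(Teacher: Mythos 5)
Your argument is correct and follows essentially the same route as the paper, whose entire proof is the one-line remark that the claim ``is obvious due to Proposition~\ref{cj:hom_cont}''; like the paper, you reduce everything to Proposition~\ref{pr:Hom_vs_wHom}, Proposition~\ref{pr:hom_cont} and Corollary~\ref{cj:hom_cont}. Your only addition is the explicit check that $\hq = \hati \circ \hp$ via the concrete image-subgraph construction, a detail the paper silently absorbs into ``obvious''; this is a welcome tightening rather than a different approach.
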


\begin{proof}
It is obvious due to Proposition \ref{cj:hom_cont}.
\end{proof}

Let us consider the case that $G_S$ is a more complicated graph than $G_O$.
Then we find a truncated graph $G$ of $G_S$.
\begin{proposition}\label{pr:GOtoGSbytrunc}
For given $G_O$ and $G_S$, there is a truncated graph $G_S'$ of $G_S$ such that
there is a continuous injection $\hq: \cC_{G_S'}\to \cC_{G_O}$.
\end{proposition}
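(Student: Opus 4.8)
The plan is to reduce the claim to two facts already established in the excerpt: first, that a single vertex always maps continuously into any graph's collection of connected subgraphs (Lemma~\ref{lm:singleVer}), and second, that if $G_S'$ is a subgraph of another graph then there is a continuous injection on the associated collections (Lemma~\ref{lm:subGcont}). Since the proposition only asks for \emph{some} truncated graph $G_S'$ of $G_S$ admitting a continuous injection $\hq:\cC_{G_S'}\to\cC_{G_O}$, the work is to exhibit at least one such $G_S'$, not to characterize all of them.

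First I would observe that $G_O=(V_O,E_O)$ is a nonempty finite connected graph, so it contains at least one vertex, and hence $\cC_{G_O}$ contains an edgeless single-vertex graph $(\{v_O\},\emptyset)$ together with $G_O$ itself. Next I would take the truncated graph $G_S'$ of $G_S$ to be the edgeless single-vertex graph $(\{v_S\},\emptyset)$ for some $v_S\in V_S$. This is genuinely a truncated graph of $G_S$ in the sense of Definition~\ref{def:homomorphism2}(1), because the inclusion of a single vertex is an injective homomorphism $(\{v_S\},\emptyset)\to G_S$ (the domain has no edges, so the homomorphism condition is vacuous, and the map on vertices is injective). Then Lemma~\ref{lm:singleVer} applies directly with $G=G_O$: the map $\hq$ from $\cC_{(\{v_S\},\emptyset)}$ to $\cC_{G_O}$ sending the unique connected subgraph $(\{v_S\},\emptyset)$ to any chosen connected subgraph of $G_O$ is continuous, and it is trivially injective since its domain is a singleton poset.

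Alternatively, if one wishes to retain as much of $G_S$ as $G_O$ can support rather than collapsing to a point, I would build $G_S'$ as a maximal subgraph of $G_S$ that embeds as a subgraph of $G_O$, and then invoke Lemma~\ref{lm:subGcont} to produce the continuous injection $\hq:\cC_{G_S'}\to\cC_{G_O}$; such a $G_S'$ exists because the empty-edge single-vertex subgraph always qualifies, so the family of embeddable subgraphs is nonempty and finite and hence has a maximal element.

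The only genuine subtlety, and the step I would treat most carefully, is the verification that a single vertex counts as a legitimate truncated graph: one must confirm that the injective-homomorphism requirement is met by the vacuous edge condition, so that $G_S'$ is of the correct type and the proposition's conclusion is about the intended object. Everything else is an immediate appeal to Lemma~\ref{lm:singleVer} (or Lemma~\ref{lm:subGcont}), so there is no real obstacle once the degenerate case is recognized as permissible.
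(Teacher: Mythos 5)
Your proof is correct and follows essentially the same route as the paper's: the paper also takes $G_S'$ to be the edgeless single-vertex graph and invokes Lemma~\ref{lm:singleVer}. Your additional care in checking that a single vertex qualifies as a truncated graph, and your alternative via Lemma~\ref{lm:subGcont}, are fine but not needed beyond what the paper does.
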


\begin{proof}
Let us consider an edgeless graph $G_S' = (\{v\}, \emptyset)$.
Lemma \ref{lm:singleVer} shows it.
\end{proof}

\subsection{Examples}\label{ssec:exmp_Conway}
Let us consider examples:

\begin{enumerate}

\item Let $G_O=(\{v\}, \emptyset)$.
Then for any non-empty finite connected graph $G_S$, there is a w-homomorphism $f: G_S \to G_O$, and thus $(G_S, G_O)$ is w-Conway doublet, and thus t-Conway doublet.

It means that a super-person or a very small group can make any complicated system, as mentioned in Introduction.
(It maybe recall the birth of Linux.)
Then it is decomposed by $G_T=(\{v\}, \emptyset)$.

Though Conway forbade internal communication in his model and this phenomenon cannot be described in Law \ref{law1},  we consider that this should be also described. 
Thus we believe that Law \ref{law2} and Law \ref{law3} are more natural than Law \ref{law1}.

\smallskip

However, this may be a not good example in software development because it cannot be scaled up.
It comes from the fact that the task graph does not reflect the geometric structure of $G_S$.

\item\label{item:G_S'} Assume that there is neither continuous map from $\hq: \cC_{G_S} \to \cC_{G_O}$ nor w-homomorphism $q: G_S \to G_O$ except such trivial maps in $\hq : \cC_{G_S}\to \cC_{(\{v\}, \emptyset)} \subset \cC_{G_O}$ or 
$q : G_S\to (\{v\}, \emptyset) \subset G_O$, and so on.
Even for such a case, we find a software structure $G_S'$ due to Proposition \ref{pr:GOtoGSbytrunc}. 

This corresponds to Remark \ref{rk:ConwayLaw1} 1 of law \ref{law1};
The restriction to homomorphism is much simpler than such exceptions of trivial maps in Law \ref{law2} or \ref{law3}.

\item 
Suppose we neglect cognitive load and avoidance of unnecessary communication. Then, if $G_O$ is a complete graph (i.e., an organization with dense communication) of sufficient size, any $G_S$ can be injective.

This maybe a model of gather-discussion (informal and open discussion/meeting in Japanese company.)

\item 
Suppose $G_S$ consists of several subgraphs $H_i$ $(i=1, \ldots, \ell)$ which are isomorphic to $H$, and there is a homomorphism $q: G_S\to G_O$.
Then there may exist $q$ such that $q(H_i) = q(H_1)$ for $i=2 \ldots, \ell$ as in Figure \ref{fg:Hom_vs_wHom} (a).
We can find a task graph $G_T$ such that there exists a surjective (w-)homomorphism $p: G_S \to G_T$ and an injective homomorphism $i: G_T \to G_O$.
A team or person corresponding to a vertex $v_O$ of $q(H_1)$ is responsible for the set of different modules, $i^{-1}(v_O)$.

As mentioned in Remark \ref{rk:ConwayLaw1}, how we construct the homomorphism $p$ is important if we consider cognitive load and avoidance of unnecessary communication.
The simple construction of $G_T$ is needed.

This view is much more important when we consider the w-homomorphism in Law \ref{law2}, in which we allow the internal communication.

\smallskip

We continue to comment on the task graphs of Law \ref{law3} because it contains the following cases (\ref{itm:6}) and (\ref{itm:7}).
For a t-Conway triplet $(G_S, G_O, G_T)$ of Law \ref{law3} and its continuous injection $\hati : \cC_{G_T}\to \cC_{G_O}$,
the induced map $\hati^* : \BG_O=(G_O, *) \to \BG_T=(G_T, *)$ shows which team (who) is responsible for which modules in $S$ or which group (who) in $O$ is responsible for which module in $S$, i.e., $\hq^\#(v)=\hp^{-1}(\{\hati^{*-1}(v)\})$.
It is important to note that certain teams are not responsible for a certain software $S$ can also be expressed by $*$ in $\BG_T$; 
The inverse map $\hati^{*-1}(*)\in G_O$ does not contributes the software.

As mentioned above, the task graph represents that a vertex $v$ of $\BG_O \setminus \{\hati^{*-1}(\{*\})\}$ corresponding a person or a team is responsible to the tasks $p^{-1}(\{\hq^*(v)\})$, as a fiber of $p: G_S \to G_T$.

Also note that the image of $\hati(v)$ for $v\in G_T$ may have geometric structure as in the following (\ref{itm:6}) and (\ref{itm:7}).

For cognitive load and avoidance of unnecessary communication, we can express such situations in terms of t-Conway triplet.

\item\label{itm:6}
Assume that there is a homomorphism $f: G_S\to G'$, and a continuous map $\hf: \cC_{G_S}\to \cC_{G'}$ corresponding to $f$ due to Proposition \ref{cj:hom_cont}.
We also assume that $G_O$ is obtained by replacing every vertex $v_i$ in $G'$ with a certain non-trivial graph $H_i$.
Then in general, there is no homomorphism nor w-homomorphism from $G_S$ to $G_O$.
However there exist a continuous map $\hq: \cC_{G_S}\to \cC_{G_O}$ induced from $f$.
The t-Conway state represents how a team responsible for a module of $G_S$ has its internal geometric structure.
It is quite natural in real management.

\item\label{itm:7} 
Assume that there is a homomorphism $f: G_S\to G'$, and a continuous map $\hf: \cC_{G_S}\to \cC_{G'}$ corresponding to $f$ due to Proposition \ref{cj:hom_cont}.
We also assume that $G_O$ is obtained by replacing every edge $e_i$ in $G'$ with a particular non-trivial graph $H_i$ such that the connection preserves.
Then, generally, there is no homomorphism nor w-homomorphism from $G_S$ to $G_O$.
On the other hand, there exist a continuous map $\hq: \cC_{G_S}\to \cC_{G_O}$ induced from $\hf$ and $f$.

Conway considered the edge in the organization $G_O$ to be a coordinator of communication.
In Conway's time it is appropriate.

However, recently the importance of the communication increases, due to secure, knowledge hiding, and the object that should be controlled.
Thus, it could be responsible by a special team.
The t-Conway doublet can express it, though Conway doublet cannot express it.

\item\label{itm:Hir7}
Figure \ref{fg:SubGrap2} shows the continuous maps. 
(Some of them are not homomorphism.)

Then it is obvious that the node in $G_S$ or $G_T$ can correspond to a subgraph of $G_O$ in the t-Conway triplet.
This means that a module or a subsystem can be made responsible by a team or an association of teams that has internal geometric properties.

Conway noted that the geometric structure has hierarchical properties, and thus the nodes in the graphs in \cite{Conway} correspond to objects at a certain level.
In other words, he considered the projective graph sequences for the systems and organizations and the correspondences as mentioned in Remark \ref{rmk:Hom_wHome2}.
Figure \ref{fg:HSseq_HOseq} (a) is the redrawing of the hierarchical sequences of graphs of software systems by using the w-homomorphism $p_i$ of Figure 1 in \cite{Conway}, though the arrows are inverse of the original one.
The larger $i$ of $G_{S,i}$ is the coarser geometric structure of $G_S$, and thus the smaller $i$ of $G_{S,i}$ is the finer geometric structure of $G_S$.
To capture the geometric structure of the system, this hierarchical treatment is necessary.

Conway also stated that such a structure in the system appears in the organization like Figure \ref{fg:HSseq_HOseq} (b).
Among them, he implicitly considered the correspondences by homomorphisms.
(As mentioned in (2), we consider that $q_i$ should be homomorphism or homomorphism-type.)
\begin{equation}
\xymatrix{ 
G_{S}=G_{S,0} \ar[r]^{p_0}\ar[d]^{q_0} & \cdots  \ar[r]^{p_{i-1}}& G_{S,i}  \ar[r]^{p_i}\ar[d]^{q_{i}} &G_{S,i+1} \ar[r]^{p_{i+1}} \ar[d]^{q_{i+1}}&\cdots 
 \ar[r]^{p_t\ \ \ \ }& G_{S,t}=(\{v\},\emptyset) \ar[d]^{q_{t}}\\
G'_{O}=G_{O,0} \ar[r]^{p'_0} & \cdots  \ar[r]^{p'_{i-1}}& G_{O,i}  \ar[r]^{p'_i}& G'_{O,i+1} \ar[r]^{p'_{i+1}} &\cdots  \ar[r]^{p'_t\ \ \ \ } &G'_{O,t}=(\{v'\},\emptyset),
 }
\label{eq:Seq_GS_GO}
\end{equation}
As mentioned in Remark \ref{rmk:Hom_wHome2}, they cannot be expressed only by the homomorphism because these $p_i$ and $p'_i$ in Figure \ref{fg:HSseq_HOseq} are not homomorphism but are w-homomorphisms.

Further, we can consider a relation between $G_{S,i+1}$ and $G_{O, i}$ because it shows how the coarse structure is realized in the finer structure of the organization.
Management of the organization sometimes requires such a view of realization.
However, even if there is a homomorphism $G_{S,i+1}\to G_{O, i+1}$, there cannot exist a natural w-homomorphism (nor homomorphism) $G_{S,i+1}\to G_{O, i}$ though we can express it by the continuous map.
In other words, to handle the hierarchical structure with Conway's correspondence, the t-Conway correspondence is better to express it than the ordinary Conway's treatment as illustrated in Figure \ref{fg:HSseq_HOseq} (a) and (b).

Furthermore Remark \ref{rm:hom_cont*} means that if we employ a variant version of the collection of subgraphs $\widehat{\cC}_G$, and the w-homomorphisms are surjective, we may also find a continuous map from  $\widehat{\cC}_{G_{O, i}}$ to $\widehat{\cC}_{G_{S,j}}$, $j<i$, which shows who is responsible for the created things.

\begin{figure}
\begin{center}
\includegraphics[width=0.65\hsize]{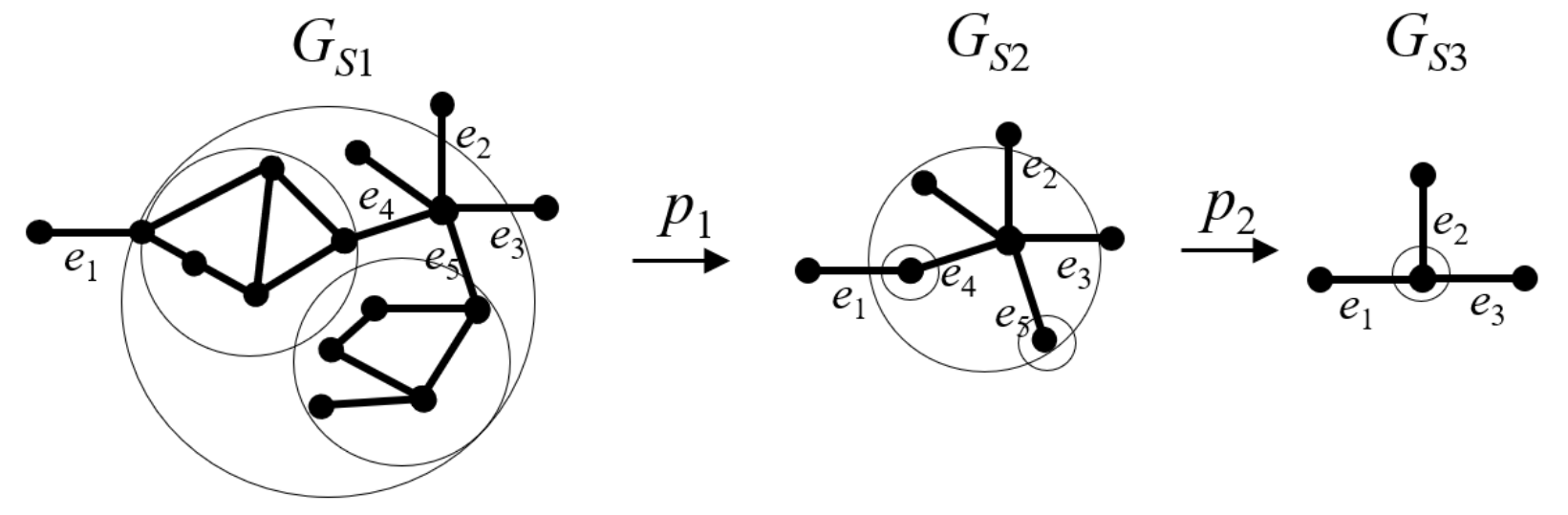}

(a)

\includegraphics[width=0.65\hsize]{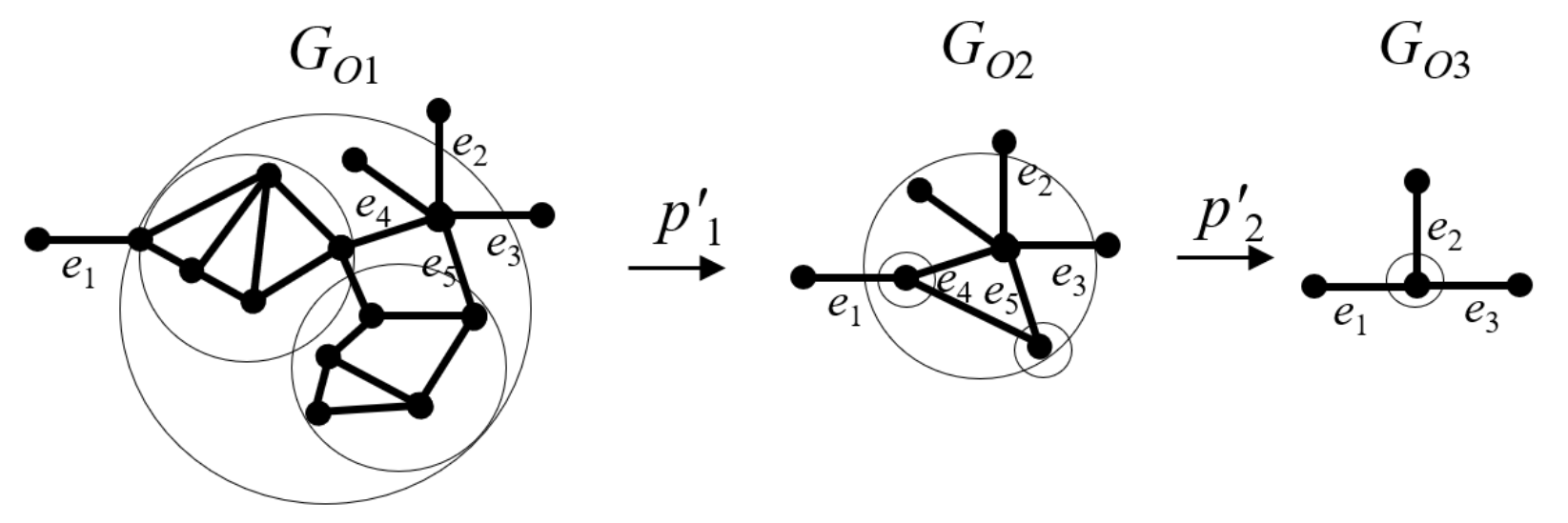}

(b)

\end{center}

\caption{
(a) is the hierarchical sequence of graphs of software systems by w-projection $p_i$ corresponding to figure 1 in \cite{Conway}, and (b) is the hierarchical sequence of organization graphs.
}\label{fg:HSseq_HOseq}
\end{figure}

\item 
By considering (\ref{itm:Hir7}), we comment on (\ref{item:G_S'})  and Remark \ref{rk:ConwayLaw1} (1) again.
Conway wrote 
{\lq\lq}First, the realization by the initial designers that the system will be large, together with certain pressures in their organization, make irresistible the temptation to assign too many people to a design effort. 
Second, application of the conventional wisdom of management to a large design organization causes its communication structure to disintegrate.
Third, the homomorphism insures that the structure of the system will reflect the disintegration which has occurred in the design organization{\rq\rq} \cite{Conway}.
As he also noted, there is a life cycle of a system design; as the system progresses in time and grows in size, the geometric structure of the software system undergoes a phenomenon in which the geometric structure of the organization does not keep up with the changes.

As mentioned in (\ref{item:G_S'}) and Remark \ref{rk:ConwayLaw1} (1), there is a concern that the geometric structure of the software may be changed to fit the organizational structure without being aware of such a deviation from what is required by the geometric structure of the software system.

The geometry of an organization is generally limited in its freedom of choice by the historical background of its establishment, its administrative structure and its financial situation.
Conway stated {\lq\lq}Common management practice places certain numerical constraints on the complexity of the linear graph which represents the administrative structure of a military-style organization. 
Specifically, each individual must have at most one superior and at most approximately seven subordinates.
To the extent that organizational protocol restricts communication along lines of command, the communication structure of an organization will resemble its administrative structure{\rq\rq} \cite{Conway}.

We emphasize that (\ref{eq:Seq_GS_GO}) shows the guideline of how to deal with the geometric structure of the software system mathematically.
We may consider how to handle the geometric structure (more precisely task graphs) for the given geometrical structures of the organization and the software system by using the diagram (\ref{eq:Seq_GS_GO}).
Though we omitted the task graph in the diagram (\ref{eq:Seq_GS_GO}), we believe we are considering a task graph, taking into account the hierarchical structure at each level.
\begin{equation}
\xymatrix{ 
%\xymatrix{
%A \ar[rd] \ar[d] \ar[rrd]
%& B \ar[rrd]|f \ar[d]|\hole \ar[rdd]|(.33)\hole & & \\
%C \ar[r] \ar[rrd] & D \ar[rrd]|(.33)\hole|\hole
%& A' \ar[r] \ar[d] & B' \ar[d] \\
%& & C' \ar[r] & D'
%}
\cdots  & G_{S,i}  \ar[r]^{p_i}\ar[ld]^{p^{(i)}}\ar[dd]|{q_{i}} 
 &G_{S,i+1} \ar[rd]^{p^{(i+1)}} \ar[dd]|{q_{i+1}}&\cdots \\
G_{T,i}\ar[rd]^{i^{(i)}} &    &  &G_{T,i+1}
\ar[ld]^{i^{(i+1)}}\\
\cdots & G_{O,i}  \ar[r]^{p'_i}& G'_{O,i+1}  &\cdots 
 }
\label{eq:Seq_GS_GO2}
\end{equation}
This allows us to consider the geometric structure for designing software from each individual to the entire organization, consistent with the hierarchical structure involved in administration.

\end{enumerate}

\begin{remark}\label{rmk:Conwayfinal}
{\rm{
These are correspondences that could not be seen by simply looking at the graph homomorphism between $G_S$ and $G_O$. 
They are the ones that came into view with the continuous map of the topology of the graphs.
Even if a continuous map or w-homomorphism exists, it is not uniquely determined in general, so operations like what kind of continuous map or w-homomorphism to choose seem to be {\lq}management{\rq} itself.

What is the novelty of our continuous map instead of homomorphism? 
The answer is that it is a bit more flexible than {\lq}one module and one team (individual){\rq}, allowing {\lq}one-to-many{\rq}, 
{\lq}multiple teams corresponding to a particular relation (edge) of modules{\rq}, and {\lq}hierarchical structure{\rq}.

Apart from the mathematical details, we believe that this approach is useful or available from a software development management perspective.
}}
\end{remark}

\section{Conclusion}

In this article, we revise Conway's Law from a mathematical point of view.
By introducing a task graph, we first rigorously state Conway's law based on the homomorphism in graph theory in Law \ref{law1}.
The task graph is a graph that represents the geometric structure of tasks.
For cognitive load and avoidance of unnecessary communication, the task graph is a nice tool to be considered.

Though Conway mentioned the importance of the hierarchical treatments of these geometrical structures, it cannot be expressed by the homomorphism well.
Thus second, we introduce w-homomorphism for in Law \ref{law2} and then we can describe the internal communication in the team itself, and partially express the hierarchy of the systems and the organizations.

For more natural expressions of the hierarchical treatments of the systems and the organizations with the Conway's correspondences, we introduce the graph topology and continuous map.
Such expressions is recent required.
We have rephrased these statements in terms of the continuous maps in graph topology in Law \ref{law3}.

To use graph topology and the continuous map in Conway's law, we have prepared them as mathematical tools.
We can express the case where a module or a subsystem can be made responsible by a team or an association of teams that has internal geometric properties.
Due to secure, knowledge hiding and so on, the communication of certain teams should be made responsible by a special team corresponding to the feature of the connections of submodules, it can be expressed in terms of the continuous map.

For future applications, we envision the following research directions.
\begin{enumerate}

\item This mathematical foundation only assumes the continuity between the two graphics. We can generalize the software to be a {\lq\lq}collaborative artifact{\rq\rq} that includes text, music, and artwork, or even a more general product.

\item
It would be useful to define what the {\lq\lq}edges{\rq\rq} of the graph in the software and the organization mean in practice. Terms such as dependencies, cohesion, coupling, utility in the software graphs, and communication, knowledge, information flow, cognitive load, responsibility, and collaboration in the organization graphs can characterize the edge definition to apply to the software engineering field. 
We believe that our scheme will make the patterns described in the context of Team Topologies \cite{SP2019} more understandable and explainable with rigorous mathematical foundations.
(The terminology of {\lq}team topologies{\rq} \cite{SP2019} (not {\lq}team graphs{\rq}) can be justified from a general topology.)

\item 
For given the complicated geometric structure of the organization connected with administration and so on, and the variation of the geometric structure of software for time development, we should design the organization and these cards with hierarchical structures. 
Our mathematical tools express them.

\item 
Since we have topological spaces associated with graphs, it is natural to consider the sheaves over them \cite{Gogunen, Sendroiu}.
We can argue more precise structures on them.

\item 
Though we consider a relation between a single system and the organization, our scheme is extended to the relation between multiple systems and the organization, i.e., the $G_S$ is not a connected graph.
In general, the organization has several software developments at the same time, and the task graph also works well as a nice tool to express this.

\end{enumerate}

\bigskip
\subsection*{Acknowledgments:}
We are grateful to Professor Kenichi Tamano for comments.
We are also grateful to Dr.~Elena \c{S}endroiu for pointing out several typographical errors and for comments on the earlier version.
This work was supported by Institute of Mathematics for Industry, Joint Usage/Research Center in Kyushu University. 
``IMI workshop I: Geometry and Algebra in Material Science III'',
September 8--10, 2022" (2022a003)).
S. M. has been supported by the Grant-in-Aid for Scientific Research (C) of Japan Society for the Promotion of Science Grant, No.21K03289, and 
S. O. has been supported by Grants: Young Scientist of Japan Society for the Promotion of Science Grant, no. 22K13963.

\bigskip

%\begin{thebibliography}{AAAA}

\bigskip
\bigskip

\noindent
Shigeki Matsutani\\
Electrical Engineering and Computer Science,\\
Graduate School of Natural Science \& Technology, \\
Kanazawa~University,\\
Kakuma Kanazawa, 920-1192, Japan\\
\texttt{ORCID:0000-0002-2981-0096}\\
\texttt{s-matsutani@se.kanazawa-u.ac.jp}

\bigskip
\noindent
Shousuke Ohmori,\\
National Institute of Technology, Gumma College\\
Electrical Engineering and Computer Science,\\
Maebash, 371-8530, Japan\\
\texttt{ORCID:0000-0002-1623-9994}\\
\texttt{42261timemachine@ruri.waseda.jp}

\bigskip
\noindent
Kenji Hiranabe,\\
ESM, Inc.\\
3-111 Toiya-cho, Fukui, 918-8231, Japan\\ 
%\texttt{ORCID$B!'(B0000-0002-2981-0096}\\
\texttt{k-hiranabe@esm.co.jp}

\bigskip
\noindent
Eiichi Hanyuda\\
Mamezou Co. , Ltd.\\
Shinjuku-Mitsui building 34F, 2-1-1, \\
Nishishinjuku Shinjuku-ku, Tokyo, 163-0434 Japan,\\
Information-technology Promotion Agency Japan \\
2-28-8 Honkomagome, Bunkyo-ku, Tokyo, 113-6591, Japan \\
\texttt{hanyuda@mamezou.com}

%\end{multicols}
\end{document}